\newcommand{\outputs}{\mathsf{outputs}}
\newcommand{\inputs}{\mathsf{sorts}}
\newcommand{\cosub}{\sqsubseteq}
\newcommand{\sset}[2]{\left\{~#1  \left|
      \begin{array}{l}#2\end{array}
    \right.     \right\}}
\renewcommand{\vec}{\bar}
\newcommand{\bu}[1]{\underline{#1}}
\newcommand{\FF}{{\cal F}}
\newcommand{\ND}{\mathit{noneDown}}
\newcommand{\NU}{\mathit{noneUp}}
\newcommand{\OD}{\mathit{oneDown}}
\newcommand{\OU}{\mathit{oneUp}}
\newcommand{\once}{\mathit{once}}
\newcommand{\sorts}{\mathit{sorts}}
\newcommand{\update}{\mathit{update}}
\newcommand{\used}{\mathit{used}}
\newcommand{\valid}{\mathit{valid}}
\newcommand{\opt}{\mathit{opt}}
\newcounter{sncolumncounter}
\newcounter{snrowcounter}
\newtheorem{example}{Example}
\newtheorem{lemma}{Lemma}
\newtheorem{definition}{Definition}
\newtheorem{theorem}{Theorem}
\newtheorem{corollary}{Corollary}
\def \nodelabel#1{%
\setcounter{snrowcounter}{1}
 \foreach \i in {#1}{%
   \draw (\sncolwidth*\value{sncolumncounter},\value{snrowcounter}) node[anchor=south]{\i};
   \addtocounter{snrowcounter}{1}
 }
\addtocounter{snrowcounter}{-1}
 \addtocounter{sncolumncounter}{1}
}
\newcommand{\sncolwidth}{0.7} %
\def \addcomparator#1#2{%
    \draw (\sncolwidth*\value{sncolumncounter},#1) node[circle,fill=black,minimum size=4pt,inner sep=0pt,outer sep=0pt]{}--(\sncolwidth*\value{sncolumncounter},#2) node[circle,fill=black,minimum size=4pt,inner sep=0pt,outer sep=0pt]{};
}
\def \addlayer{%
  \addtocounter{sncolumncounter}{1}
}
\def \nextlayer{%
  \draw [dashed] (\sncolwidth*\value{sncolumncounter}+\sncolwidth,0.6)--(\sncolwidth*\value{sncolumncounter}+\sncolwidth,\value{snrowcounter}+0.6);
  \addtocounter{sncolumncounter}{2}
}
\def \nodeconnection#1{%
  \foreach \i in {#1}{%
    \GetTokens{nodesrc}{nodedest}{\i}
    \draw (\value{sncolumncounter}*0.7,\nodesrc) node[circle,inner sep=0pt,minimum size=3pt,fill=black]{}--(\value{sncolumncounter}*0.7,\nodedest) node[circle,inner sep=0pt,minimum size=3pt,fill=black]{};
  }
  \addtocounter{sncolumncounter}{1}
}
\newenvironment{sortingnetwork}[2]
{
  \setcounter{sncolumncounter}{0}
  \setcounter{snrowcounter}{#1}
  \def \sn@fullsize{15}
  \begin{tikzpicture}[scale=#2*0.7]
}
{
  \foreach \i in {1, ..., \value{snrowcounter}}
  {
    \draw (-\sncolwidth,\i)--(\sncolwidth*\value{sncolumncounter}+\sncolwidth,\i);
  }
  \end{tikzpicture}
}
\begin{document} 

\title{Sorting Networks: to the End and Back Again\tnoteref{t1}} 

\tnotetext[t1]{Supported by the Israel Science Foundation, grant
  182/13, by a fellowship within the FITweltweit programme of the German Academic Exchange Service (DAAD), and by the Danish Council for Independent Research, Natural
  Sciences. Computational resources provided by an IBM Shared
  University Award (BGU) and the Dependable Systems Group, University of Kiel.}  

\author[bgu]{Michael Codish} \ead{mcodish@cs.bgu.ac.il} 
\author[sdu]{Lu\'\i s Cruz-Filipe\corref{cor1}} \ead{lcf@imada.sdu.dk} 
\author[uk]{Thorsten Ehlers} \ead{the@informatik.uni-kiel.de} 
\author[uk]{Mike M\"uller} \ead{mimu@informatik.uni-kiel.de} 
\author[sdu]{Peter Schneider-Kamp} \ead{petersk@imada.sdu.dk}
\cortext[cor1]{Corresponding author; tel.\ +45 6550 2387, fax +45 6550 2373}

\address[bgu]{Department of Computer Science, Ben-Gurion University of
  the Negev, Israel} 
\address[sdu]{Department of Mathematics and
  Computer Science, University of Southern Denmark}
\address[uk]{Institut f\"ur Informatik,
  Christian-Albrechts-Universit\"at zu Kiel, Germany}

\begin{abstract}
  This paper studies new properties of the front and back ends of a
  sorting network, and illustrates the utility of these in the search
  for new bounds on optimal sorting networks.
  Search focuses first on the ``out-sides'' of the network and then on
  the inner part.
  All previous works focus only on properties of the front end of
  networks and on how to apply these to break symmetries in the
  search.
  The new, out-side-in, properties help shed understanding on how
  sorting networks sort, and facilitate the computation of new bounds
  on optimal sorting networks.
  We present new parallel sorting networks for 17 to 20 inputs.  For
  17, 19, and 20 inputs these networks are faster than the previously
  known best networks. 
  For 17 inputs, the new sorting network is shown optimal in the sense
  that no sorting network using less layers exists.
\end{abstract}

\maketitle

\section{Introduction}
\label{sec:intro}
Sorting networks are a classical model for sorting algorithms on
fixed-length lists. The elements to sort are placed on the input
channels of a network of compare-and-exchange units connecting
pairs of channels. The sequence of these units, called comparators,
uniquely determines the algorithm. Consecutive comparators can be
viewed as a ``parallel layer'' if no two touch on the same channel.
Ever since sorting networks were introduced, there has been a quest to
find optimal sorting networks for particular small numbers of inputs:
optimal depth networks (in the number of parallel layers), as well as
optimal size (in the number of comparators).
For an overview on sorting networks see, for example,
Knuth~\cite{Knuth73} or Parberry~\cite{Parberry87}.

Sorting networks are data-oblivious sorting algorithms, i.e., the
sequence of comparisons they perform is independent of the input
list. Thus, they are well suited to implementation as hardware
circuits, where their size determines the number of gates used and
their depth the delay of the circuit. Recent
work~\cite{ourhopefulpaper,Furtak2007} has also shown their
performance potential in software implementations of sorting as base
cases of general recursive sorting algorithms.

The formal simplicity of the model, the ubiquitousness of sorting, and
the computationally challenging nature of the optimization problems
even for very small instances is a potent mixture, which  has
intrigued computer scientists since the middle 1950s. 
In the late 1960s, Donald E. Knuth and Robert W. Floyd authored a
series of articles on the topic, essentially settling all optimality
questions for networks with up to $8$ channels. The results of this
effort are presented in the chapter dedicated to sorting networks in
the famous monograph by Knuth~\cite{Knuth73}.
Nevertheless, no further breakthroughs were made before Ian Parberry
in 1989~\cite{Parberry89,Parberry91} settled the depth-optimality for $9$ and
$10$ channels, using a then state-of-the-art Cray supercomputer. Only
very recently, depth optimality for $11$ to $16$~\cite{BZ14} and
size optimality for $9$ and $10$~\cite{ourICTAIpaper} were finally
settled.

The new results are due to both an exponential increase in
computational power, vastly improved search methods, and new
theoretical results on symmetries. To illustrate the first point, the
program from \cite{Parberry89,Parberry91} for $9$ channels runs in just $12.3$
seconds on a single thread of Parberry's current desktop computer
instead of $200$ hours on a supercomputer~\cite{Parberry15}.
Processing the case of $11$ channels requires more than $2$ months,
and that of $13$ channels is infeasible.
 In contrast, using SAT solvers and
stronger notions of symmetry, the results for $11$ and $13$ channels
were obtained in approx.\ $15$ minutes and $11$ hours, respectively
\cite{BCCSZ14}.

This paper reveals new properties of sorting networks that deepen
our understanding of how they work and facilitate our search for new
results on depth optimality for sorting networks with more than $16$
channels.
Until now,
all relevant progress on depth optimality has been made by
identifying and breaking symmetries in the first layer(s) of the
sorting networks, usually based on the %
property of sorting networks being closed under a particular variant
of permutation.
For example, Parberry's results on depth optimality~\cite{Parberry89,Parberry91}
derive from the observation that the first layer can be fixed, while
the results of Bundala and Z\'avodn\'y~\cite{BZ14} derive by fixing
the first two layers and using an improved notion of symmetry.
All of the recent results on optimality of sorting networks are
obtained as encodings to Boolean satisfiability, where a SAT solver is
applied to determine the existence of a sorting network of a given
size or depth. In this approach, the size of the SAT encodings is
polynomial in the set of all inputs to the sorting network, and hence
exponential in the number of elements to sort. This complexity seems
inherent. Fixing some layers of the network leads to significant
reduction in the size of the encoding, and is key to these new results.

The first contribution of this paper is a study of the \emph{end}
layers of a sorting network.  We show that the comparators in the last
layer of a sorting network are of a very particular form, and that
those in penultimate layer are also of limited structure. We show that
these results translate to constraints on the search for optimal
sorting networks and lead to significant performance improvements when
searching for networks of a given depth using SAT encodings. 
We also obtain theoretical results that, while not directly impacting
the SAT-solving times, establish a duality between the symmetries in
the first layers and in the last layers of a sorting network.

The second contribution is an improvement of the SAT encoding used by
\cite{BZ14}, again reducing the size of the SAT problems and improving
SAT solving performance. Together with the first contribution, we are
able to find a sorting network on $17$ channels with depth $10$ very
efficiently, by fixing the first $3$ layers as in a Green
filter~\cite{Coles12}. This result improves the previous upper bound
of $11$ to $10$.

While the upper bound can simply be improved by finding a smaller
sorting network, improving the lower bound from $9$ to $10$ requires
showing that there does not exist a sorting network of depth $9$ on
$17$ channels.  In other words, this requires reasoning over the full
space of $9$-layer sorting networks.
Fortunately, extending on the idea of \cite{BZ14}, we do not need to
consider all $2^n$ inputs: if there is no $9$-layer network that sorts
a subset of the inputs, then there is no $9$-layer sorting network.

The third contribution of this paper is the observation that the
performance of the SAT-based search for extensions of a prefix using
the above idea heavily depends on the prefix chosen. We show how to
refine prefixes by permutation in such a way that the size of the SAT
problems is vastly reduced, and that this translates to real-world
performance gains.

By combining all three contributions, the process of SAT solving for
unsatisfiable instances experiences a speed-up of several orders of
magnitude. This allows us to show that there is indeed no depth-$9$
sorting network on $17$ channels, i.e., that the new depth-$10$
sorting network we found is not only faster, but that it is also
optimal.
This paper is structured as follows.  In Section~\ref{sec:backgr} we
summarize relevant concepts, theory, and results on sorting networks
and present an improved SAT encoding.  Section~\ref{sec:last} presents
the new theoretical results about the structure of the last layers and
exploits them to improve the SAT encoding.
Section~\ref{sec:permuting} then focuses on permuting prefixes in
order to optimize the SAT encoding.  As a result, we present in
Section~\ref{sec:upbound} more efficient sorting networks than
previously known, and in Section~\ref{sec:lowbound} we prove a new
lower bound for $17$ channels.  We summarize our results in
Section~\ref{sec:concl}.

This paper combines and significantly extends preliminary work
presented in~\cite{ourLATApaper} and in~\cite{DBLP:conf/cie/EhlersM15}.

\section{An Overview of Sorting Networks}
\label{sec:backgr}

This section first introduces background material on sorting networks,
together with an overview of previous results that are used in the
remainder of the paper. At its end we present an improved SAT encoding
for finding sorting networks.

\subsection{Sorting Network Classics}

A \emph{comparator network} on $n$ channels is a sequence
$C=L_1,\ldots,L_d$ of \emph{layers}, where each layer $L_k$ is a
non-empty set of pairs $(i,j)$, with $1\leq i<j\leq n$, such that: if
$(i,j),(i',j')\in L_k$, then $\{i,j\}\cap\{i',j'\}=\emptyset$.  The
pairs $(i,j)$ are called \emph{comparators}; the \emph{depth} of $C$
is $d$, and the \emph{size} of $C$ is the total number of comparators
in all its layers, $\sum_{k=1}^d|L_k|$.

An $n$-channel comparator network is viewed as a function with $n$
inputs, which propagate through the network to give $n$ outputs. Each
comparator potentially changes the order between values on its two
channels. 
Let $\vec x$ be a sequence of inputs from some totally ordered set and
denote by $\vec x_\ell$ its $\ell$-th element.  Sequence $\vec x$
propagates through $C$ as follows: at each layer $L_k$, if $(i,j)\in
L_k$ and $\vec x_i>\vec x_j$, then $\vec x_i$ and $\vec x_j$ are
interchanged.  More formally, we define a sequence $\vec
x^0,\ldots,\vec x^d$ as follows:
\begin{equation}
\begin{aligned}
  \vec x^0 &= \vec x \\
  \vec x^k_i &=
    \begin{cases}
      \vec x^k_j & (i,j)\in L_k, \vec x^{k-1}_i>\vec x^{k-1}_j \\
      \vec x^k_j & (j,i)\in L_k, \vec x^{k-1}_j>\vec x^{k-1}_i \\
      \vec x^k_i & \mbox{ otherwise}
    \end{cases}
\end{aligned}
\label{eq:snRun}
\end{equation}

The \emph{output} of $C$ on $\vec x$, which we denote by $C(\vec x)$, is
$\vec x^d$.
The set of all outputs of $C$ is $\outputs(C)$, and $C$ is said to be a
\emph{sorting network} (on $n$ channels) if all these sequences are sorted.
The zero-one principle (see e.g.~\cite{Knuth73}) states that a
comparator network $C$ is a sorting network if and only if $C$ sorts
all Boolean inputs. Hence, in the remainder of the paper, we consider
only comparator networks with Boolean inputs.

\begin{figure}[htb]
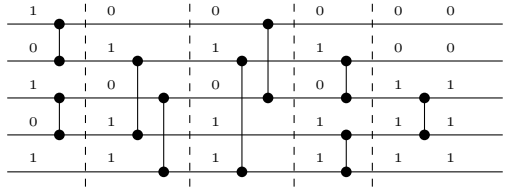

 \centering
 \begin{sortingnetwork}{5}{0.7}
   \nodelabel{\tiny 1,\tiny 0,\tiny 1,\tiny 0,\tiny 1}
   \addcomparator45
   \addcomparator23
   \nextlayer
   \nodelabel{\tiny 1,\tiny 1,\tiny 0,\tiny 1,\tiny 0}
   \addcomparator24
   \addlayer
   \addcomparator13
   \nextlayer
   \nodelabel{\tiny 1,\tiny 1,\tiny 0,\tiny 1,\tiny 0}
   \addcomparator14
   \addlayer
   \addcomparator35
   \nextlayer
   \nodelabel{\tiny 1,\tiny 1,\tiny 0,\tiny 1,\tiny 0}
   \addcomparator12
   \addcomparator34
   \nextlayer
   \nodelabel{\tiny 1,\tiny 1,\tiny 1,\tiny 0,\tiny 0}
   \addcomparator23
   \addlayer
   \nodelabel{\tiny 1,\tiny 1,\tiny 1,\tiny 0,\tiny 0}
 \end{sortingnetwork}
 \caption{The sorting network
   $\{(1,2),(3,4)\},\{(2,4),(3,5)\},\{(1,3),(2,5)\},\{(2,3),(4,5)\},\{(3,4)\}$,
   on $5$ channels, operating on the input $10101$.}
 \label{fig:sort-5}
\end{figure}

It is common practice to represent comparator networks graphically as a Knuth diagram,
as showed in Figure~\ref{fig:sort-5}.   Channels are depicted as horizontal
lines, with values traveling from left to right, and comparators as
vertical lines connecting two channels.  Layers are sometimes made
explicit and then separated by dashed vertical lines.

For sorting networks, two natural optimization questions arise.
\begin{itemize}
\item The \emph{optimal size problem}: what is the minimum number of
  comparators, $s_n$, in a sorting network on $n$ channels?
\item The \emph{optimal depth problem}: what is the minimum number of
  layers, $t_n$, in a sorting network on $n$ channels?
\end{itemize}

The sorting network in Figure~\ref{fig:sort-5}, constructed in $5$
layers with a total of 9 comparators, is both size-optimal and
depth-optimal. In general, there is not always a single network that
is optimal for both criteria.
Table~\ref{tab:optimal} presents the currently known values of $s_n$ and
$t_n$. In case the precise value is unknown, the table presents the
best known lower and upper bounds. For instance, the smallest open
problem for the optimal size problem is for $11$ channels. The minimum
number of comparators in a sorting network on $11$ channels is known
to be between $33$ and $35$, but the precise number is not known.
Values which are derived as contributions of this paper are set in
boldface.
The values of $s_1$--$s_8$ and $t_1$--$t_8$ were already known in the
1960s and are included in~\cite{Knuth73}, as well as all upper bounds
up to $s_{16}$ and $t_{16}$.  The remaining upper bounds for $s_n$
can be found in~\cite{ValsalamM13}, and the upper bounds
$t_{17},t_{18}\leq 11$ and $t_{19},t_{20}\leq 12$ in \cite{Batcher2011}.
A computer-assisted proof of the exact values of $s_9$ and $s_{10}$
was first presented in 2014~\cite{ourICTAIpaper}, and the remaining
lower bounds for $s_n$ follow from these by the result of van
Voorhis~\cite{Voorhis72} that $s_n\geq s_{n-1}+\lceil\lg n\rceil$.
The exact values of $t_9$ and $t_{10}$ were first shown by Parberry in
1989~\cite{Parberry89,Parberry91}, and those of $t_{11}$--$t_{16}$ by Bundala and
Z\'avodn\'y in 2014~\cite{BCCSZ14,BZ14}.
The latter results imply that $t_n\geq 9$ for every $n\geq 17$.
Among the contributions of this paper are the proofs
that $t_{17}=10$ and $t_{20}\leq 11$. As a consequence, we obtain that
$10\leq t_n\leq 11$ for $18\leq n\leq20$.%

\begin{table}
  \centering
  \begin{tabular}{c|cccccccccccccccccccc}
    \toprule
    $n$ & 1 & 2 & 3 & 4 & 5 & 6 & 7 & 8 & 9 & 10 & 11 & 12 & 13 & 14 & 15 & 16 &
    17 & 18 & 19 & 20
    \\ \midrule
    \multirow2*{$s_n$} & \multirow2*0 & \multirow2*1 & \multirow2*3 &
    \multirow2*5 & \multirow2*9 & \multirow2*{12} & \multirow2*{16} &
    \multirow2*{19} & \multirow2*{25} & \multirow2*{29} & 35 & 39 & 45 & 51 & 56
    & 60 & 71 & 78 & 86 & 92 \\
    &&&&&&&&&&& 33 & 37 & 41 & 45 & 49 & 53 & 58 & 63 & 68 & 73 \\ \midrule
    \multirow2*{$t_n$} & \multirow2*0 & \multirow2*1 & \multirow2*3 &
    \multirow2*3 & \multirow2*5 & \multirow2*5 & \multirow2*6 & \multirow2*6 &
    \multirow2*7 & \multirow2*7 & \multirow2*8 & \multirow2*8 & \multirow2*9 &
    \multirow2*9 & \multirow2*9 & \multirow2*9 & \multirow2*{\bf 10} &
    11 &\textbf{11} & \textbf{11} \\
    &&&&&&&&&&&&&&&&&  & \textbf{10} & \textbf{10} & \textbf{10}\\ \bottomrule
  \end{tabular}
  \caption{Best known values and bounds on optimal size ($s_n$) and depth
    ($t_n$) of sorting networks on $n$ inputs, for $n\leq
    20$. The contributions of this paper are shown in boldface.}
  \label{tab:optimal}
\end{table}

For $\vec x, \vec y \in\{0,1\}^n$ we write $\vec
x\leq\vec y$ to denote that every bit of $\vec x$ is less than or
equal to the corresponding bit of $\vec y$, and $\vec x<\vec y$ for
$\vec x\leq\vec y$ and $x\neq y$.
The following two observations will be be instrumental for proofs in
later sections.  
\begin{lemma}%
\label{lem:sorted}
Let $C$ be a comparator network with $d$ layers, $\vec x^0 \in
\{0,1\}^n$, and $\vec x^0,\ldots,\vec x^d$ the sequence defined in
Equation~\eqref{eq:snRun}.
If $\vec x^0$ has $r$ leading zeroes and $s$ trailing ones ($r+s\leq
n$) then also each sequence $\vec x^i$, with $1\leq i\leq d$, has $r$
leading zeroes and $s$ trailing ones. In particular, if $\vec x^0$ is
sorted, then the sequence $\vec x^0,\ldots,\vec x^d$ is constant.
\end{lemma}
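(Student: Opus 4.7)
The plan is to prove the statement by induction on the layer index $i$, reducing the main claim to showing that applying a single layer preserves both the number of leading zeroes and the number of trailing ones. Since leading zeroes and trailing ones are symmetric (one can mirror the network and flip bits), it suffices in practice to handle the leading-zeroes case and remark that the argument for trailing ones is dual.

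For the inductive step, assume $\vec x^{i-1}$ has at least $r$ leading zeroes and $s$ trailing ones, and consider any comparator $(a,b) \in L_i$ with $a < b$. The argument is a short case distinction on how $\{a,b\}$ relates to the ``safe zones'' $\{1,\dots,r\}$ and $\{n-s+1,\dots,n\}$. If both $a$ and $b$ lie in the leading-zeroes zone, then both channels already hold $0$ and the comparator acts as the identity. If $a \le r < b$, then $\vec x^{i-1}_a = 0$ and so $\min(\vec x^{i-1}_a, \vec x^{i-1}_b) = 0$ remains on channel $a$; channel $b$ may change, but it is not among the first $r$ positions. If $r < a < b$, neither position is among the first $r$, so those entries are untouched. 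In all cases the first $r$ entries of $\vec x^i$ remain $0$. The same analysis, applied to the trailing-ones zone, shows that the last $s$ entries remain $1$.

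A straightforward induction on $i$ then propagates the property through all $d$ layers, yielding the first assertion. For the ``in particular'' clause, observe that a sorted Boolean vector is of the form $0^a 1^{n-a}$, which has exactly $a$ leading zeroes and $n-a$ trailing ones with $a + (n-a) = n$. The preservation result forces every $\vec x^i$ to have the same shape, hence $\vec x^i = \vec x^0$ for all $i$.

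The argument is essentially a bookkeeping exercise; the only subtle point is making the case distinction exhaustive and noting that a comparator can never ``push'' a $1$ leftward past a position known to hold $0$, nor a $0$ rightward past a position known to hold $1$. I do not foresee any real obstacle beyond writing the cases cleanly.
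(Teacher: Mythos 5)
Your proof is correct. Note, however, that the paper itself gives no proof of this lemma at all --- it is stated without proof, with only the remark that it has never been stated explicitly but is used implicitly in the work of Bundala and Z\'avodn\'y --- so there is nothing to compare your argument against; your induction over layers with the three-way case analysis on how a comparator $(a,b)$ meets the zones $\{1,\dots,r\}$ and $\{n-s+1,\dots,n\}$ is exactly the standard argument one would expect, and it is complete. One fine point worth making explicit: the statement must be read as ``at least $r$ leading zeroes and at least $s$ trailing ones,'' since the exact count is not preserved (e.g.\ the input $(0,1,0)$ with the single comparator $(2,3)$ goes from one leading zero to two); your inductive hypothesis already adopts this reading, and it is the one the paper relies on, as well as the one that makes the ``in particular'' clause go through.
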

To the best of our knowledge, Lemma~\ref{lem:sorted} has never been
stated explicitly, but it is used implicitly in~\cite{BZ14}.  The
observation that $\vec x^0=\vec x^1=\cdots=\vec x^d$ when $\vec x^0$
is sorted was already made in~\cite{Knuth73}.

\begin{lemma}[Theorem~4.1 in~\cite{Batcher2011}]
  \label{lem:monot}
  Let $C$ be a comparator network and $\vec x,\vec y\in\{0,1\}^n$ be such that
  $\vec x\leq\vec y$. Then $C(\vec x)\leq C(\vec y)$.
\end{lemma}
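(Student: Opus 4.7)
The plan is to prove the statement by induction on the depth $d$ of $C$, using the sequences $\vec x^0,\ldots,\vec x^d$ and $\vec y^0,\ldots,\vec y^d$ defined by Equation~\eqref{eq:snRun} from inputs $\vec x$ and $\vec y$. The inductive claim to be strengthened is that $\vec x^k \leq \vec y^k$ holds componentwise for every $0 \leq k \leq d$; specialising to $k=d$ yields the lemma.

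The base case $k=0$ is immediate from the hypothesis $\vec x = \vec x^0 \leq \vec y^0 = \vec y$. For the inductive step, assume $\vec x^{k-1} \leq \vec y^{k-1}$ and consider a layer $L_k$. Since the comparators in $L_k$ act on pairwise disjoint channels, and channels not touched by any comparator in $L_k$ retain their values (so the inequality is preserved on those coordinates trivially), it suffices to verify the inequality on each pair $(i,j) \in L_k$ separately. That is, I would reduce the problem to showing: if $(a,b),(a',b') \in \{0,1\}^2$ with $a\leq a'$ and $b\leq b'$, then $(\min(a,b),\max(a,b)) \leq (\min(a',b'),\max(a',b'))$ componentwise. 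This is the standard monotonicity of $\min$ and $\max$; on Booleans these coincide with $\wedge$ and $\vee$, and the verification is a two-variable case analysis (or, more slickly, an observation that $\wedge$ and $\vee$ are monotone in each argument).

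Assembling these pointwise inequalities across all comparators in $L_k$ (and across untouched coordinates) yields $\vec x^k \leq \vec y^k$, completing the induction. The only mildly delicate point is to be careful with the notation of Equation~\eqref{eq:snRun}, which distinguishes the two orderings $(i,j)$ and $(j,i)$ in the layer; but in both encodings the effect on the pair of channels is exactly the $\min$/$\max$ action described above, so the case split dissolves at once.

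No step here is a genuine obstacle: the entire argument is a routine structural induction, and its real content is the two-element monotonicity check for the elementary compare-and-exchange. The only thing worth stressing is that the induction must be carried out with the \emph{componentwise} inequality as the invariant, rather than trying to argue directly about $C(\vec x)$ and $C(\vec y)$, since the partial order on $\{0,1\}^n$ is precisely what is preserved layer by layer.
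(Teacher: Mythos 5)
Your proof is correct. Note that the paper itself gives no proof of this lemma --- it is imported as Theorem~4.1 of the Baddar--Batcher reference --- so there is nothing to compare against; your layer-by-layer induction with the componentwise inequality $\vec x^k\leq\vec y^k$ as invariant, reduced to the monotonicity of $\min$ and $\max$ on each comparator (with untouched channels handled trivially), is the standard argument and is complete as stated.
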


\subsection{On Fixing the First Two Layers}

Throughout this paper we focus mostly on the optimal depth problem.
We aim to find new results to strengthen and extend the type of
techniques applied in previous works that helped establish the values of
$t_9$--$t_{16}$.

In order to establish that $t_n>k$, it is necessary to show that no
comparator network on $n$ channels with depth $k$ can sort all inputs.
Except in very simple cases, all known techniques require the analysis
of the entire search space of depth-$k$ comparator networks.  
However, this space grows very rapidly: the number of all possible
layers on $n$ channels coincides with the number of matchings in a
complete graph on $n$ vertices, which is exponential in~$n$, so
exhaustively generating it only works for small values of $n$.

Previous work reduces the search space by considering symmetries in
the first two layers of a sorting network.
The following terminology is useful.  Given two comparator networks
$C_1=L_1,\ldots,L_d$ and $C_2=L'_1,\ldots,L'_{d'}$, their
concatenation is the network
$C=C_1;C_2=L_1,\ldots,L_d,L'_1,\ldots,L'_{d'}$, and we say that $C_1$
is a \emph{$d$-layer prefix} (or simply a \emph{prefix}) of $C$.  A
layer is \emph{maximal} if it contains exactly $\lfloor\frac
n2\rfloor$ comparators.
We will call a set $\mathcal F$ of comparator networks of depth $k$ a
\emph{complete set of filters} (of depth $k$) if there is an optimal-depth
sorting network on $n$ channels of the form $F;C$ with $F\in\mathcal F$.
Parberry proved~\cite{Parberry89,Parberry91} that $\{L\}$ is a complete set of
filters of depth~$1$ for any maximal layer $L$ on $n$ channels,
and illustrated results using the particular first layer filter
which we denote 
\begin{equation}
  \label{eq:filterParberry}
  L_1^P = \sset{ ( 2i-1, 2i)}{1 \leq i \leq 
                 \left\lfloor \frac{n}{2} \right\rfloor }
\end{equation}
Parberry also considered restrictions on the second layer of a sorting
network, noting that one need not consider  prefixes that
are identical modulo permutations of channels that leave the first
layer fixed.

Parberry's ideas were later extended by Bundala and
Z\'avodn\'y~\cite{BZ14}, who pruned the search space by further
restricting the possibilities for the second layer of a sorting
network.  These results were strengthened in~\cite{ourSYNASCpaper},
which presents an efficient algorithm to generate complete sets of
filters $R_n$ of depth~$2$ for any~$n$.  The sizes of these sets for
$n\leq 20$ are given in Table~\ref{tab:Rn}; for comparison, we also
detail the order of magnitude of the total number $|G_n|$ of possible
layers.
The construction of the sets $R_n$ is described in
\cite{BZ14,ourSYNASCpaper} and uses two different ideas.  The first is
the notion of \emph{saturation}: a two-layer network $L_1;L_2$ is
saturated if $L_1$ is maximal and it is not possible to find
$L_2'\supsetneq L_2$ such that
$\outputs(L_1;L_2')\subsetneq\outputs(L_1;L_2)$.  It follows that the
set of all saturated networks is a complete set of two-layer filters.
The second notion involves permutations.  Given two comparator
networks $C$ and $C'$ on $n$ channels, we say that $C$ \emph{subsumes}
$C'$, denoted $C\preceq C'$, if
$\pi(\outputs(C))\subseteq\outputs(C')$ for some permutation $\pi$ of
$\{1,\ldots,n\}$.  In this situation, if $C'$ can be extended to a
sorting network, then so can $C$ (within the same size or depth),
which allows $C'$ to be removed from the search space.
In fact, the sets $R_n$ are not unique, as replacing any network in a
complete set of filters by one of its permutations yields another
complete set of filters. In the
following, we denote by $\FF_P$ a complete set of filters that all have
the layer $L_1^P$ of Equation~(\ref{eq:filterParberry}) as first layer.

\begin{table}\small
  \[\begin{array}{c|cccccccccccccccccc}
  \toprule
  n & 3 & 4 & 5 & 6 & 7 & 8 & 9 & 10 & 11 & 12 & 13 & 14 & 15 & 16 & 17 & 18 &
  19 & 20 \\ \midrule
  |G_n| & 10^0 & 10^1 & 10^1 & 10^1 & 10^2 & 10^2 & 10^3 & 10^3 & 10^4 & 10^5 & 10^5 & 10^6 & 10^7 & 10^7 & 10^8 & 10^8 & 10^9 & 10^{10} \\
  |R_n| & 1 & 2 & 4 & 5 & 8 & 12 & 22 & 21 & 48 & 50 & 117 & 94 & 262 & 211 &
  609 & 411 & 1{,}367 & 894\\ \bottomrule
  \end{array}\]

  \caption{Order of magnitude of $|G_n|$, the number of possible second layers
    on $n$ channels, and value of $|R_n|$, a complete set of two-layer filters,
    for $n\leq 20$.}
  \label{tab:Rn}
\end{table}

Given a complete set $\FF$ of (two layer) filters on $n$ channels, one
may search for a depth-$d$ sorting network considering (in parallel)
the independent search problems of extending each $F\in\FF$ to a sorting network
of depth $d$. It is sufficient to solve any one of these
separate search problems. Similarly, to prove that no depth-$d$
sorting network on $n$ channels exists, it is sufficient to show (in
parallel) that none of the filters $F\in\FF$ extend to a depth-$d$
sorting network.

\subsection{Prefixes with More than Two Layers}

Over the years, when searching for smaller sorting networks (improved
upper bounds) it has become common practice to fix the prefix of the
network, typically with more than just two layers. 
For example, in 1969, Green found the smallest network on 16 channels
known until today (with $60$ comparators) applying what is now called
a \emph{Green filter} \cite{Knuth73}. A Green filter on $n$ channels
(where $n$ is a power of $2$) consists of $\log_2 n$ maximal layers,
structured as exemplified for 16 channels in Figure~\ref{fig:green}.
It is also interesting to note that Parberry's construction for
Pairwise sorting networks introduced in \cite{Parberry92} includes a
Green filter (by construction), although it is not presented as
such. Codish and Zazon showed in \cite{CodishZ10} that Parberry's
Pairwise network is also a simple reordering of the layers in Batcher's
Odd-Even network~\cite{Batcher68}.

\begin{figure}[htb]
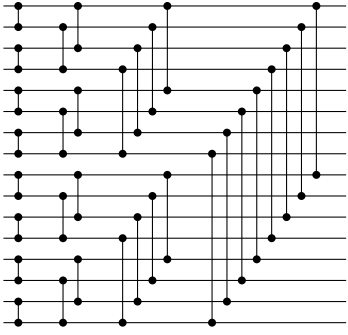

    \centering
    \begin{sortingnetwork}{16}{0.4}
        \nodeconnection{ {1,2}, {3,4}, {5,6}, {7,8}, {9,10}, {11,12}, {13,14}, {15,16}}
        \addtocounter{sncolumncounter}{2}
        \nodeconnection{ {1,3}, {5,7}, {9,11}, {13,15}}
        \nodeconnection{ {2,4}, {6,8}, {10,12}, {14,16}}
        \addtocounter{sncolumncounter}{2}
        \nodeconnection{ {1,5}, {9,13}}
        \nodeconnection{ {2,6}, {10,14}}
        \nodeconnection{ {3,7}, {11,15}}
        \nodeconnection{ {4,8}, {12,16}}
        \addtocounter{sncolumncounter}{2}
        \nodeconnection{ {1,9}}
        \nodeconnection{ {2,10}}
        \nodeconnection{ {3,11}}
        \nodeconnection{ {4,12}}
        \nodeconnection{ {5,13}}
        \nodeconnection{ {6,14}}
        \nodeconnection{ {7,15}}
        \nodeconnection{ {8,16}}
    \end{sortingnetwork}
    \caption{A Green filter on $16$ channels.}
    \label{fig:green}
\end{figure}

When searching for improved upper bounds on the depth (or size) of a
sorting network, a common criterion to select a particular filter is
to consider the number of its (unsorted) outputs.
For example, Coles~\cite{Coles12} suggests to evaluate filters in this way,
introducing the notion of Square filters, and observing that, for
$16$-channel networks, these produce fewer outputs than Green filters.
Later, Baddar and Batcher~\cite{Batcher2011} developed the
\texttt{sortnet} program to facilitate the specification of a filter
and its evaluation with respect to its number of outputs.

In this paper we show that, using the first three layers of the Green
filter depicted in Figure~\ref{fig:green} together with our other
contributions, we can find a $10$-layer sorting network
on $17$ channels. This improves on the previously smallest-known depth of a
sorting network on $17$ channels ($11$ layers).

\subsection{SAT Encoding for Depth-Restricted Sorting Networks}
\label{backgr_sat}

A first approach to encode sorting networks as formulae in
propositional logic was suggested by Morgenstern and
Schneider~\cite{MorgensternS11}.  However, their encoding to SAT did not
prove sufficient to find new results concerning optimal-depth
networks, as it did not scale for $n > 10$.
Bundala and Z\'avodn\'y~\cite{BZ14} continued this approach, and
introduced a better SAT encoding that was able both to find sorting
networks of optimal depth with up to $13$ channels and to prove their
optimality, implying the optimal depth of the best known networks with
up to $16$ channels.
This
was the first SAT formulation that led to new results on optimal-depth
sorting networks.
Bundala {\it et al.}~\cite{BCCSZ14} further improve this encoding to
establish optimal depth for networks with $n\leq 16$ channels directly.
In this paper, we obtain results applying further extensions and
optimizations of the SAT encoding presented in~\cite{BCCSZ14}, hence
for sake of completeness, we recall it here.

A comparator network of depth $d$ on $n$ channels is represented by a
set of Boolean variables $C^d_n=\sset{g^k_{i,j}}{1 \leq i < j \leq n,1
  \leq k \leq d}$, the value of $g^k_{i,j}$ indicating whether there is a
comparator on channels $i$ and $j$ in layer $k$ in the network
or not.
Furthermore, denoting the inputs into the network by the
variables~$v^0_i$, with $1 \leq i \leq n$, the variables $v^k_i$, with $1
\leq i \leq n$ and $1 \leq k \leq d$, store the value on channel $i$
in the network after layer $k$.  
The existence of a sorting network of depth $d$ on $n$ channels is
then encoded in terms of the following propositional constraints: 
\begin{align*}
 \once^k_i(C^d_n) = & \bigwedge_{1 \leq i \neq j \neq \ell \leq n} 
       \left( \neg g^k_{\min(i, j), \max(i, j) } \vee 
              \neg g^k_{\min(i, \ell), \max(i, \ell) } \right) \\
 \valid(C^d_n) = & \bigwedge_{1 \leq k \leq d, 1 \leq i \leq n} \once^k_i(C^d_n) \\
 \used^k_i(C^d_n) = & \bigvee_{j <i} g^k_{j, i} \vee \bigvee_{i < j} g^k_{i, j} \\
 \update^k_i(C^d_n,\bar v,w) = 
      & \left( \neg \used_i^k(C^d_n) \rightarrow
          (w \leftrightarrow v_i) \right) \wedge \\
      &\bigwedge_{1 \leq j < i} \left( g^k_{j,i} \rightarrow 
          \left(w \leftrightarrow (v_j \vee v_i)\right) \right) \wedge \\
      &\bigwedge_{i < j \leq n} \left( g^k_{i,j} \rightarrow 
           \left(w \leftrightarrow (v_j \wedge v_i )\right) \right)
\end{align*}
Here, $\once$ encodes the fact that each channel may be used only once
in one layer, and $\valid$ enforces this constraint for each channel
and each layer.
The $\update$ constraint specifies the value $w$ in channel $i$ after layer
$k$, given that the values before that layer are those in $\bar v=(v_1,\ldots,v_n)$.
In the next constraint, $\bar x = (x_1,\ldots,x_n) \in \{0,1\}^n$ and 
$\bar y = (y_1,\ldots ,y_n)$ is the sequence obtained by sorting $\bar
x$, while $\vec v^k=(v^k_1,\ldots,v^k_n)$ are fresh variables for $k=0,\dots,d$.
\begin{align}\label{eq:sorts}
  \sorts(C^d_n, \bar x) = 
    \bigwedge_{1 \leq i \leq n} (v^0_i \leftrightarrow x_i) \wedge 
    \bigwedge_{1 \leq k \leq d, 1 \leq i \leq n} 
          \update^k_i(C^d_n,\vec v^{k-1},v^k_i) \wedge 
          \bigwedge_{1 \leq i \leq n} (v^d_i \leftrightarrow y_i) 
\end{align}
The constraint $\sorts$ encodes whether a certain input, $\bar x$, is sorted
by the network $C^d_n$.  For this purpose, the values after layer $d$
(i.e., the outputs of the network) are compared to the vector $\bar y$.
A sorting network for $n$ channels on $d$ layers exists if and only if the following constraint
is satisfiable.
\begin{equation}\label{eq:sat}
  \varphi(n,d) = \valid(C^d_n) \wedge \bigwedge_{\bar x \in \{0,1\}^n} \sorts(C^d_n, \bar x)
\end{equation}
When seeking a sorting network of depth $d$ that extends a given
prefix $P$, we consider instead the proposition {\it``There is a
  comparator network on $n$ channels of depth $d-|P|$ that sorts
  $\outputs(P)$''}, or, formally: 
\begin{equation}\label{eq:satP}
  \varphi_P(n,d) = \valid(C^{d-|P|}_n) \wedge \bigwedge_{\bar x \in \outputs(P)} \sorts(C^{d-|P|}_n, \bar x)
\end{equation}

\subsection{Our Improved SAT Formulation}
\label{sec:improved}
In this paper we further optimize the SAT encoding
detailed in Section~\ref{backgr_sat}. We consider three
optimizations.
The first two optimizations decrease the size of each of the
(exponentially many) conjuncts in the formulae of
Equations~(\ref{eq:sat}) and~(\ref{eq:satP}).
First, based on Lemma~\ref{lem:sorted}, per input sequence $\bar x^0$,
we know that a leading zero (trailing one) in $\bar x^0$ is also a
leading zero (trailing one) in $\bar x^k$ at each layer $k$. So, we
fix these values at encoding time.
This idea was already used by Bundala and Z\'avodn\'y~\cite{BZ14}.
Second, %
for each input sequence $\bar x^0$, we know that for all pairs of
channels $i<j$, if $x^0_i$ is a leading zero or $x^0_j$ is a trailing
one, %
then the
comparator $(i,j)$ does not do anything. So when constructing
$\sorts(C^d_n,\bar x_ 0)$
we can remove clauses in the $\update$ constraint that are
related to pairs $(i,j)$ touching leading zeroes or trailing ones.
In fact, these two optimizations  follow from the SAT encoding
described in Section~\ref{backgr_sat} by unit propagation; however,
because of the sheer size of the encoding (which for $n>16$ involves
millions of clauses), it is beneficial to perform these optimizations
at encoding time.

The third optimization is about adding additional, redundant, clauses
that help propagate information in the SAT solving phase. We
demonstrate this (and also the first two optimizations) by means of an
example.

\begin{example}\label{ex:improved}
  Consider a sorting network on $n=6$ channels, an input $\bar x^0 =
  (0, 1, 0, 1, 0, 1)$, and the output of the first layer, $\bar x^1 =
  (x^1_1, x^1_2, x^1_3, x^1_4, x^1_5, x^1_6)$.  
  Figure~\ref{fig:sort-6}(a) illustrates this setting, where a ``?'' on
  an input value indicates that we do not know whether a comparator will be
  placed somewhere on the corresponding channel.  By
  Lemma~\ref{lem:sorted}, $x^1_1=0$ and $x^1_6=1$, so $\bar x^1 = (0,
  x^1_2, x^1_3, x^1_4, x^1_5, 1)$, as indicated in the second layer of
  Figure~\ref{fig:sort-6}(a).
  Now consider the value of $x^1_4$. Clearly, the only first level
  comparator that will change the input value $x^0_4=1$ is $(4, 5)$.
  Therefore, adding any other comparator on channel $5$ determines that
  $x^1_4=1$ and one could specifically add propagation clauses of the
  form $g^1_{i,5}\rightarrow x^1_4$ for $1 \leq i < 4$.
  
  Figure~\ref{fig:sort-6}(b) illustrates the situation where
  comparator $(3,5)$ is placed in layer $1$. Channels $3$ and $5$ are
  now in use, hence the ``?'' is removed from the corresponding input
  values. The values of $x^1_3$ and $x^1_5$ are determined by the
  comparator. Moreover, as argued above, the value of $x^1_4$ is set to~$1$.
  
  Figure~\ref{fig:sort-6}(c) illustrates the situation if a second
  comparator, $(1,4)$, is added to layer $1$. The value $x^1_2=1$ is
  determined  by an argument similar to the one that determined
  $x^1_4=1$.  
\end{example}

\begin{figure}
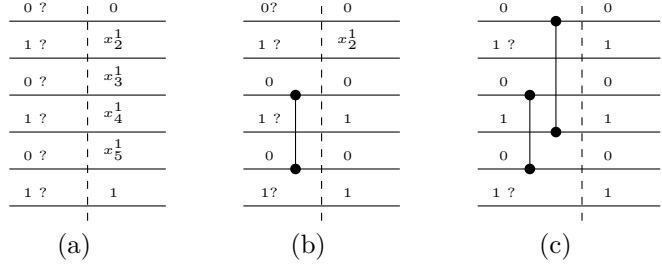

 \centering
\begin{tabular}{ccc}
 \begin{sortingnetwork}{6}{0.7}
   \nodelabel{\tiny 1   ?, \tiny 0 ?,\tiny 1 ?,\tiny 0 ?,\tiny 1 ?, \tiny 0 ?}
   \nextlayer
  \nodelabel{\tiny 1,\tiny $x^1_5$, \tiny $x^1_4$, \tiny $x^1_3$, \tiny $x^1_2$, \tiny 0}
 \end{sortingnetwork} &
  \begin{sortingnetwork}{6}{0.7}
   \nodelabel{\tiny 1?, \tiny 0,\tiny 1 ?,\tiny 0,\tiny 1 ?, \tiny 0?}
   \addcomparator24
   \nextlayer
  \nodelabel{\tiny 1,\tiny $0$, \tiny $1$, \tiny $0$, \tiny $x^1_2$, \tiny 0}
 \end{sortingnetwork} &
  \begin{sortingnetwork}{6}{0.7}
   \nodelabel{\tiny 1 ?, \tiny 0,\tiny 1 ,\tiny 0 ,\tiny 1 ?, \tiny 0}
   \addcomparator24
   \addlayer
   \addcomparator36
   \nextlayer
  \nodelabel{\tiny 1,\tiny $0$, \tiny $1$, \tiny $0$, \tiny $1$, \tiny 0}
 \end{sortingnetwork} \\
 (a) & (b) & (c) 
\end{tabular}

\caption{Propagations for the first layer of a sorting network on $6$
  channels determined from the input sequence $010101$ (see
  Example~\ref{ex:improved}).}
 \label{fig:sort-6}
\end{figure}

Example~\ref{ex:improved} demonstrates that positioning of single
comparators has additional implications that can be considered in a
SAT encoding. These implications derive from specific input sequences,
but the information required to use them is ``global''. 
Thus, for every layer $k$ and every pair of 
channels $(i,j)$ we introduce propositional variables $\OD^k_{i, j}$ and
$\OU^k_{i,j}$, which indicate whether there is a comparator
$g^k_{\ell,j}$ for some $i \leq \ell < j$ or $g^k_{i, \ell}$ for some
$i < \ell \leq j$, respectively.
\begin{align*}
   \OD^k_{i, j} & \leftrightarrow \bigvee_{i < \ell \leq j} g^k_{i,\ell} &
   \ND^k_{i, j} & \leftrightarrow \neg \OD^k_{i, j} \\
   \OU^k_{i, j} &\leftrightarrow \bigvee_{i \leq \ell < j} g^k_{\ell, j} & 
   \NU^k_{i, j} &\leftrightarrow \neg \OU^k_{i, j}
\end{align*}
To make use of these new propositional variables, given an input $\vec
x = (0, 0, \ldots, 0, x_t, x_{t+1}, \ldots, x_{t+r-1}, 1, 1, \ldots,
1) $, for all $t \leq i \leq t+r-1$ and at each layer $k$, we add the
following constraints to the definition of $\sorts$~(\ref{eq:sorts}).
\begin{align*}
 \bigwedge_{1 \leq k \leq d } v^{k-1}_i \wedge \ND^k_{i, t+r-1} &\rightarrow v^{k}_i \\
 \bigwedge_{1 \leq k \leq d } \neg v^{k-1}_i \wedge \NU^k_{t, i} &\rightarrow \neg v^{k}_i  
\end{align*}
This encoding allows for more propagations, thus conflicts can be
found earlier. 

\section{The End Game}
\label{sec:last}

Recent results on depth-optimal sorting networks stemmed from
the restriction of the search space of all comparator networks based
on complete sets of filters.
In this section, we focus on the dual problem: what do the \emph{last}
layers of a sorting network look like?
We obtain several interesting theoretical results that shed some
insight on the semantics of sorting networks, and also help
to solve open instances of the optimal depth-problem.

To the best of our knowledge, the only previous attempt to address this issue
was Parberry's heuristic for deciding whether a given prefix could be extended
to a sorting network~\cite{Parberry89,Parberry91}, which used necessary conditions on the
last two layers.  However, those conditions are not expressible without knowing
what all but the last two layers are.
In contrast, we look for generic conditions about all sorting networks.

\subsection{The Last Layers of a Sorting Network}
\label{sec:props}

We begin by recalling the notion of redundant comparator, introduced
in Exercise~5.3.4.51 of~\cite{Knuth73} with credit to R.L.~Graham.
\begin{definition}
  Let $C;(i,j);C'$ be a comparator network.
  The comparator $(i,j)$ is \emph{redundant} if $\vec x_i\leq\vec x_j$
  for all $\vec x\in\outputs(C)$.
\end{definition}
A sorting network without redundant comparators is called \emph{non-redundant}.

\begin{lemma}
  \label{lem:redundant}
  Let $D$ be a comparator network on $n$ channels.
  Construct $D'$ by removing every redundant comparator from $D$.
  Then $D'$ is a sorting network iff $D$ is a sorting network.
\end{lemma}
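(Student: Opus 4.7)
The plan is to prove the stronger statement that $\outputs(D)=\outputs(D')$, from which the biconditional follows immediately. The argument proceeds by iteratively removing redundant comparators one at a time.

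First, I would establish the single-step version: if $D = C;(i,j);C'$ and the comparator $(i,j)$ is redundant in this position, then the network $D_0 = C;C'$ (obtained by deleting $(i,j)$ from its layer) satisfies $\outputs(D_0)=\outputs(D)$. For any input $\vec x^0$, its propagation through $C$ produces some $\vec y \in \outputs(C)$ with $\vec y_i\leq\vec y_j$ by the redundancy hypothesis; by Equation~\eqref{eq:snRun}, the comparator $(i,j)$ then leaves $\vec y$ unchanged. Since the other comparators in the layer containing $(i,j)$ are on channels disjoint from $\{i,j\}$, the order of application within the layer is irrelevant, so the values emerging from that layer agree in $D$ and $D_0$; the common suffix $C'$ then yields the same output.

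Second, I would verify that this removal preserves the redundancy status of every other comparator $c$ of $D$. Redundancy of $c$ depends only on the output set of the prefix of the network ending just before $c$. If $c$ occurs before $(i,j)$, that prefix is unchanged; if $c$ lies in the same layer as $(i,j)$, the two comparators act on disjoint channels and so $c$ reads the same values in $D$ and in $D_0$; and if $c$ occurs after $(i,j)$, then its prefix produces the same outputs in $D_0$ as in $D$ by the first step. Hence the set of comparators flagged as redundant in the network is invariant under the removal of one such comparator.

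Third, I would iterate the construction: starting from $D$, remove redundant comparators one at a time. By the second step, the set of redundant comparators never changes along the way, so the process terminates precisely at $D'$ (discarding any layers that become empty, which is harmless since an empty layer acts as the identity function). Chaining the first step across all these removals gives $\outputs(D)=\outputs(D')$, and hence $D'$ is a sorting network iff $D$ is. The main subtlety in the argument, and the only point where care is needed, is the second step: since redundancy is defined relative to the surrounding network, one must check that the notion is stable under exactly the kind of edits the lemma performs; once that stability is established, the rest is immediate from the observation that a comparator whose inputs are already ordered is the identity.
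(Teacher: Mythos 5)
Your proof is correct and follows the same idea as the paper, which simply observes that by the definition of redundancy $D(\vec x)=D'(\vec x)$ for every input $\vec x$. You have merely spelled out the details the paper leaves implicit (the one-at-a-time removal, the stability of redundancy under removal, and the treatment of layers), all of which check out.
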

\begin{proof}
  From the definition of redundant comparator, it follows that
  $D(\vec x)=D'(\vec x)$ for every $\vec x\in\{0,1\}^n$.
\end{proof}

Lemma~\ref{lem:redundant} was already explored in the proof of optimality of the
$25$-comparator sorting network on $9$~channels~\cite{ourICTAIpaper}.
In general, it is not easily applicable to proofs of optimal depth based on
SAT-encodings, because redundancy is a semantic property that is not easy to
encode syntactically.
In order to use it in this context, we need syntactic criteria for
redundancy.

\begin{lemma}
  \label{lem:lastlayer}
  Let $C$ be a non-redundant sorting network on $n$ channels.
  Then all comparators in the last layer of $C$ are of the form
  $(i,i+1)$.
\end{lemma}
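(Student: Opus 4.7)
My plan is a proof by contradiction via a single carefully chosen input. Suppose $(i,j) \in L_d$ with $j > i+1$, and consider
\[\vec x^{**} := 0^{i-1}\,1\,0^{j-i-1}\,1\,1^{n-j},\]
an input of weight $n-j+2$. By Lemma~\ref{lem:sorted}, the $i-1$ leading zeroes and $n-j+1$ trailing ones of $\vec x^{**}$ (the run at positions $j, j+1, \ldots, n$) are preserved all the way through the prefix $C' := L_1, \ldots, L_{d-1}$, so $\vec y^{**} := C'(\vec x^{**})$ agrees with $\vec x^{**}$ on those positions. Since $C'$ preserves weight, exactly one remaining $1$ of $\vec y^{**}$ is free to sit somewhere in positions $\{i, \ldots, j-1\}$.

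Next, $L_d(\vec y^{**})$ must equal the sorted vector $0^{j-2}\,1^{n-j+2}$, and the comparator $(i,j)$ acts on the pair $(y^{**}_i, y^{**}_j) = (y^{**}_i, 1)$. Placing the free $1$ at position $i$ gives $L_d(\vec y^{**})_i = \min(1,1) = 1$, contradicting the sorted value $0$ at position $i$ (recall $i \le j-2$). Placing it at some $p \in \{i+1, \ldots, j-2\}$ requires $L_d(\vec y^{**})_p = 0$; a short case analysis on whether channel $p$ is involved in another comparator of $L_d$ shows that this either fails outright or forces an auxiliary comparator $(p, j-1) \in L_d$ of strictly smaller gap $j-1-p < j-i$. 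The only stable possibility is $p = j-1$, in which case $\vec y^{**} = 0^{j-2}\,1^{n-j+2}$ is sorted and in particular $\vec y^{**}_i = 0$.

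To close the argument I invoke the non-redundancy of $(i,j)$: there exists $\vec x^*$ with $\vec y^* := C'(\vec x^*)$ satisfying $y^*_i = 1$ and $y^*_j = 0$. If $\vec x^* \le \vec x^{**}$, then Lemma~\ref{lem:monot} gives $\vec y^* \le \vec y^{**}$, so $1 = y^*_i \le y^{**}_i = 0$, a contradiction. Otherwise $\vec x^*$ has a $1$ at some position $k$ where $\vec x^{**}$ has a $0$, namely $k \in \{1, \ldots, i-1\} \cup \{i+1, \ldots, j-1\}$; in each such case I plan either to thicken $\vec x^{**}$ to $\vec x^{**} \vee e_k$ and rerun the leading-zero / trailing-one analysis of Lemma~\ref{lem:sorted}, or to reduce the problem to the smaller-gap auxiliary comparator produced above and iterate. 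The main obstacle I anticipate is making this iterative reduction uniform across all $k$; termination is guaranteed because the gap $j-i$ strictly decreases at each recursive step and the channel set used by $L_d$ is finite, but one still has to verify that the cumulative assignment of auxiliary comparators respects the disjointness constraint of a single layer.
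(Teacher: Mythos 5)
The first half of your argument is sound: for the specific input $\vec x^{**}=0^{i-1}\,1\,0^{j-i-1}\,1\,1^{n-j}$, Lemma~\ref{lem:sorted} pins down all positions except one floating $1$ in $\{i,\ldots,j-1\}$, and ruling out position $i$ (since $(i,j)$ would then output $\min(1,1)=1$ where the sorted output needs a $0$) correctly yields $C'(\vec x^{**})_i=0$. (The intermediate case analysis for $p\in\{i+1,\ldots,j-2\}$ is both shakier than you state --- it forces some comparator $(p,q)$ with $q\leq j-1$, not specifically $(p,j-1)$, and produces no contradiction --- and unnecessary, since excluding $p=i$ already gives you what you use later.) The genuine gap is the last step. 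Non-redundancy hands you an arbitrary witness $\vec x^*$ with $C'(\vec x^*)_i=1$ and $C'(\vec x^*)_j=0$, and there is no reason for $\vec x^*\leq\vec x^{**}$ to hold; indeed it fails whenever the weights disagree in the wrong direction, which is the typical case. Your fallback --- thickening $\vec x^{**}$ to $\vec x^{**}\vee e_k$ or recursing on auxiliary comparators of smaller gap --- is only a plan: the join destroys exactly the leading-zero/trailing-one structure that made the analysis of $\vec x^{**}$ work, and the recursion is not defined (what is the new comparator, the new input, the new invariant?). As written, the proof does not close.

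For comparison, the paper avoids this difficulty by never leaving the witness's Hamming level: starting from an input $\vec x$ on which $(i,j)$ sees $1$ above $0$, it shows via Lemma~\ref{lem:monot} that flipping a single $0$ to $1$ (or a single $1$ to $0$) preserves the configuration ``$1$ on channel $i$, $0$ on channel $j$ before the last layer.'' Since any word of the same weight as $\vec x$ is reachable by such up--down moves, the \emph{sorted} input of that weight also exhibits the configuration, contradicting Lemma~\ref{lem:sorted}. If you want to salvage your approach, the fix is essentially to replace your single engineered input $\vec x^{**}$ by this propagation argument anchored at the non-redundancy witness itself.
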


\begin{wrapfigure}[7]{r}{0.35\textwidth}
  \begin{tabular}{cc}
    \begin{sortingnetwork}{3}{0.7}
      \nodelabel{0,0,1}
      \addcomparator13
      \addlayer
      \nodelabel{1,0,0}
    \end{sortingnetwork}
    &
    \begin{sortingnetwork}{3}{0.7}
      \nodelabel{0,1,1}
      \addcomparator13
      \addlayer
      \nodelabel{1,1,0}
    \end{sortingnetwork}
    \\
    $(a)$ & $(b)$
  \end{tabular}
\end{wrapfigure}\ \vspace*{-1.8em}

\begin{proof}
  Let $C$ be a non-redundant sorting network with a comparator
  $c=(i,i+2)$ in the last layer.
  Since $c$ is not redundant, there is an input $\vec x$ such that
  channels $i$ to $i+2$ before applying $c$ look like~$(a)$ or~$(b)$
  on the right.

  Suppose $\vec x$ is an input yielding case~$(a)$, and let $\vec y$
  be any input obtained by replacing one $0$ in $\vec x$ by a $1$.
  Since $C$ is a sorting network, $C(\vec y)$ is sorted, but since
  $\vec x<\vec y$ the value in channel $i$ before applying $c$ must be
  a $1$ (Lemma~\ref{lem:monot}), hence $\vec y$ yields
  situation~$(b)$.
  Dually, given $\vec y$ yielding $(b)$, we know that any $\vec z$
  obtained by replacing one $1$ in $\vec y$ by a $0$ will yield
  $(a)$.

  Thus all inputs with the same number of zeroes as $\vec x$ or
  $\vec y$ must yield either~$(a)$ or~$(b)$, in particular sorted
  inputs, contradicting Lemma~\ref{lem:sorted}.
  The same reasoning generalizes to show that $c$ cannot have the form $(i,i+k)$
  with $k\geq2$, thus it has to be of the form $(i,i+1)$.
\end{proof}

\begin{corollary}
  \label{cor:lastlayer}
  Suppose that $C$ is a non-redundant sorting network that contains a
  comparator $(i,j)$ at layer~$d$, with $j>i+1$.
  Then at least one of channels $i$ and $j$ is used in a layer $d'$
  with $d'>d$.
\end{corollary}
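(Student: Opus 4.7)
The plan is to argue by contradiction, pushing the comparator $(i,j)$ all the way to a new last layer and then invoking Lemma~\ref{lem:lastlayer}. Assume neither channel $i$ nor channel $j$ appears in any layer $d'$ of $C$ with $d'>d$. Construct a new network $C'$ as follows: remove $(i,j)$ from layer~$d$ (deleting that layer altogether if it consisted solely of $(i,j)$, so that every remaining layer is non-empty), and append a fresh singleton layer $\{(i,j)\}$ after the last layer of $C$. By construction, $C'$ is a valid comparator network on $n$ channels.

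First I would verify that $C'$ computes the same function as $C$, i.e.\ $C'(\vec x)=C(\vec x)$ for every $\vec x\in\{0,1\}^n$. Since the comparator $(i,j)$ commutes with every comparator that touches neither channel $i$ nor channel $j$, and by hypothesis no comparator in layers $d+1,\ldots,D$ touches $i$ or $j$, moving $(i,j)$ past those layers does not alter any intermediate value on any channel. In particular, the only difference between $C$ and $C'$ is the timing of the single compare-and-swap on $(i,j)$, which is performed in both cases. Hence $C'$ is a sorting network.

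Next I would check that $C'$ is still non-redundant. Comparators in layers $1,\ldots,d-1$ receive exactly the same inputs in $C$ and in $C'$, so they remain non-redundant. For comparators in layers $d+1,\ldots,D$, their inputs lie on channels other than $i$ and $j$, which are unaffected by whether $(i,j)$ has been applied at layer $d$; hence these comparators also see the same values in both networks and remain non-redundant. Finally, the moved comparator $(i,j)$ in the new last layer of $C'$ receives precisely the values $\vec x^{d-1}_i$ and $\vec x^{d-1}_j$, the same pair it received at layer $d$ in~$C$; since it was non-redundant there, it is non-redundant in~$C'$.

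Thus $C'$ is a non-redundant sorting network whose last layer is the singleton $\{(i,j)\}$. Applying Lemma~\ref{lem:lastlayer} to $C'$ forces $(i,j)$ to have the form $(k,k+1)$, contradicting the assumption $j>i+1$. The only subtle point is the bookkeeping around possibly emptying layer~$d$, but deleting an empty layer preserves the sequence-of-non-empty-layers definition and affects neither the output function nor the non-redundancy argument, so this poses no real obstacle.
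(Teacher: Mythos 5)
Your proposal is correct and takes essentially the same route as the paper, whose entire proof is the observation that $(i,j)$ can be moved to the last layer without changing the function computed by $C$, followed by an appeal to Lemma~\ref{lem:lastlayer}. You simply make explicit the commutation argument and the preservation of non-redundancy that the paper leaves implicit.
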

\begin{proof}
  If neither $i$ nor $j$ are used after layer~$d$, then comparator
  $(i,j)$ can be moved to the last layer without changing the function
  computed by~$C$.  By the previous lemma, $C$ cannot be a
  non-redundant sorting network.
\end{proof}

Lemma~\ref{lem:lastlayer} restricts the number of possible comparators
in the last layer in a sorting network on $n$ channels to $n-1$,
instead of $n(n-1)/2$ in the general case.

\begin{theorem}
  \label{thm:lastlayer}
  The number of possible last layers in an $n$-channel sorting network
  with no redundancy is $L_n=F_{n+1}-1$, where $F_n$ denotes the
  Fibonacci sequence.
\end{theorem}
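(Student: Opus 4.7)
The plan is to convert the statement into a pure combinatorial counting problem using Lemma~\ref{lem:lastlayer}, and then invoke the classical Fibonacci recurrence for independent sets on a path.

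First, I would use Lemma~\ref{lem:lastlayer} to assert that the last layer of any non-redundant sorting network on $n$ channels consists only of comparators drawn from the set $E=\{(1,2),(2,3),\ldots,(n-1,n)\}$. Encoding the presence of comparator $(i,i+1)$ by the integer $i$, this identifies every candidate last layer with a subset $S\subseteq\{1,\ldots,n-1\}$. The defining condition of a layer (that any two of its comparators be channel-disjoint) translates to the requirement that $S$ contain no two consecutive integers: if $i,i+1\in S$, then $(i,i+1)$ and $(i+1,i+2)$ would share channel $i+1$, and conversely. Layers also have to be non-empty by definition, so we must additionally require $S\neq\emptyset$.

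Next, I would count the subsets of $\{1,\ldots,m\}$ containing no two consecutive integers by the standard recurrence. Let $a_m$ denote this count. Splitting on whether $m\in S$ or not gives $a_m=a_{m-1}+a_{m-2}$, with base cases $a_0=1$ (only the empty set) and $a_1=2$ (empty set and $\{1\}$). A direct induction then yields $a_m=F_{m+2}$, where $F$ denotes the Fibonacci sequence with $F_1=F_2=1$.

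Finally, specializing to $m=n-1$ gives $F_{n+1}$ subsets in total, and subtracting the empty subset (which does not correspond to a valid layer) yields $L_n=F_{n+1}-1$, as desired.

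The reasoning is routine once Lemma~\ref{lem:lastlayer} is in hand; the only care needed is to remember the non-emptiness requirement on layers, which accounts for the $-1$ and is the easiest thing to forget. The conceptual content is entirely carried by Lemma~\ref{lem:lastlayer}: it reduces a question about arbitrary matchings on $n$ points to a question about independent sets on the path graph $P_{n-1}$, whose answer is a well-known Fibonacci number.
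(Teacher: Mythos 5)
Your proof is correct and follows essentially the same route as the paper: both reduce the count, via Lemma~\ref{lem:lastlayer}, to choosing a set of pairwise non-overlapping adjacent comparators, establish the Fibonacci recurrence by splitting on whether the first position is used, and subtract one for the empty layer. Your phrasing in terms of independent sets on the path graph is just a relabelling of the paper's channel-based recursion.
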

\begin{proof}
  Denote by $L^+_n$ the number of possible last layers on $n$
  channels, where the last layer is allowed to be empty (so
  $L_n=L^+_n-1$).  There is exactly one possible last layer on $1$
  channel, and there are two possible last layers on $2$ channels (no
  comparators or one comparator), so $L^+_1=F_2$ and $L^+_2=F_3$.

  Given a layer on $n$ channels, there are two possibilities.  Either
  the first channel is unused, and there are $L^+_{n-1}$ possibilities
  for the remaining $n-1$ channels; or it is connected to the second
  channel, and there are $L^+_{n-2}$ possibilities for the remaining
  $n-2$ channels.
  So $L^+_n=L^+_ {n-1}+L^+_{n-2}$, whence $L^+_n=F_{n+1}$.
\end{proof}

Even though $L_n$ grows quickly, it grows much slower than the number
$|G_n|$ of possible layers in general (Table~\ref{tab:Rn}).
In particular, $L_{17}=2583$, whereas $G_{17}=211{,}799{,}312$.

To move backwards from the last layer, we introduce an auxiliary notion.
\begin{definition}
  Let $C$ be a depth $d$ sorting network without redundant
  comparators, and let $k<d$.
  A \emph{$k$-block} of $C$ is a maximal set of channels $B$ such that:
  if $i,j\in B$, then there is a sequence of channels
  $i=x_0,\ldots,x_\ell=j$
  where $(x_i,x_{i+1})$ or $(x_{i+1},x_i)$ is a comparator in a layer
  $k'>k$ of $C$.
\end{definition}
For each $k$ the set of $k$-blocks of $C$ is a partition of the set of
channels of $C$.
Given a $k$-block $B$, we will abuse terminology and refer to ``the
comparators in $B$'' to denote the comparators connecting channels in~$B$
after layer $k$.

Given a comparator network of depth $d$, we will call its
$(d-1)$-blocks simply \emph{blocks} -- so Lemma~\ref{lem:lastlayer}
states that a block in a sorting network $C$ contains either (a)~a
channel unused at the last layer of $C$ or (b)~two adjacent channels
connected by a comparator at the last layer of $C$.

\begin{example}
  Recall the sorting network shown in Figure~\ref{fig:sort-5}, and reproduced in Figure~\ref{fig:blocks}.
  Its $4$-blocks, or simply blocks, are $\{1\}$, $\{2\}$, $\{3,4\}$
  and $\{5\}$;
  its $3$-blocks are $\{1\}$, $\{2,3,4,5\}$;
  and for $k<3$ there is only the trivial $k$-block
  $\{1,2,3,4,5\}$ .
  The $3$-block $\{2,3,4,5\}$ contains the comparators $(2,3)$, $(4,5)$ and
  $(3,4)$.
\end{example}

\begin{figure}[htb]
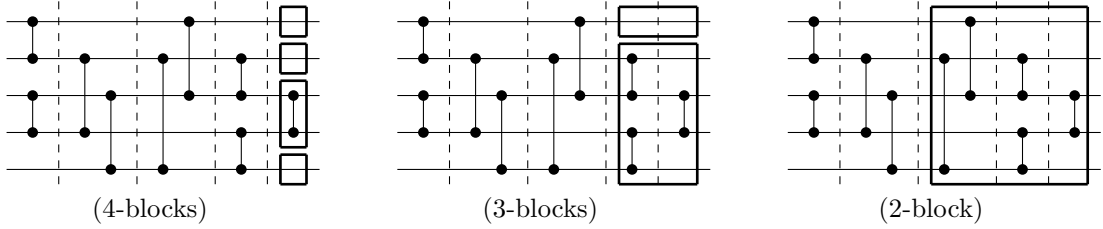

  \centering
  \begin{tabular}{ccc}
    \begin{sortingnetwork}{5}{0.7}
      \addcomparator45\addcomparator23\nextlayer
      \addcomparator24\addlayer\addcomparator13\nextlayer
      \addcomparator14\addlayer\addcomparator35\nextlayer
      \addcomparator12\addcomparator34\nextlayer
      \addcomparator23
      \draw[line width=1pt](\sncolwidth*10.5,0.6)--(\sncolwidth*10.5,1.4);
      \draw[line width=1pt](\sncolwidth*10.5,1.4)--(\sncolwidth*9.5,1.4);
      \draw[line width=1pt](\sncolwidth*9.5,1.4)--(\sncolwidth*9.5,0.6);
      \draw[line width=1pt](\sncolwidth*9.5,0.6)--(\sncolwidth*10.5,0.6);
      \draw[line width=1pt](\sncolwidth*10.5,1.6)--(\sncolwidth*10.5,3.4);
      \draw[line width=1pt](\sncolwidth*10.5,3.4)--(\sncolwidth*9.5,3.4);
      \draw[line width=1pt](\sncolwidth*9.5,3.4)--(\sncolwidth*9.5,1.6);
      \draw[line width=1pt](\sncolwidth*9.5,1.6)--(\sncolwidth*10.5,1.6);
      \draw[line width=1pt](\sncolwidth*10.5,3.6)--(\sncolwidth*10.5,4.4);
      \draw[line width=1pt](\sncolwidth*10.5,4.4)--(\sncolwidth*9.5,4.4);
      \draw[line width=1pt](\sncolwidth*9.5,4.4)--(\sncolwidth*9.5,3.6);
      \draw[line width=1pt](\sncolwidth*9.5,3.6)--(\sncolwidth*10.5,3.6);
      \draw[line width=1pt](\sncolwidth*10.5,4.6)--(\sncolwidth*10.5,5.4);
      \draw[line width=1pt](\sncolwidth*10.5,5.4)--(\sncolwidth*9.5,5.4);
      \draw[line width=1pt](\sncolwidth*9.5,5.4)--(\sncolwidth*9.5,4.6);
      \draw[line width=1pt](\sncolwidth*9.5,4.6)--(\sncolwidth*10.5,4.6);
    \end{sortingnetwork}
    &
    \begin{sortingnetwork}{5}{0.7}
      \addcomparator45\addcomparator23\nextlayer
      \addcomparator24\addlayer\addcomparator13\nextlayer
      \addcomparator14\addlayer\addcomparator35\nextlayer
      \addcomparator12\addcomparator34\nextlayer
      \addcomparator23
      \draw[line width=1pt](\sncolwidth*10.5,0.6)--(\sncolwidth*10.5,4.4);
      \draw[line width=1pt](\sncolwidth*10.5,4.4)--(\sncolwidth*7.5,4.4);
      \draw[line width=1pt](\sncolwidth*7.5,4.4)--(\sncolwidth*7.5,0.6);
      \draw[line width=1pt](\sncolwidth*7.5,0.6)--(\sncolwidth*10.5,0.6);
      \draw[line width=1pt](\sncolwidth*10.5,4.6)--(\sncolwidth*10.5,5.4);
      \draw[line width=1pt](\sncolwidth*10.5,5.4)--(\sncolwidth*7.5,5.4);
      \draw[line width=1pt](\sncolwidth*7.5,5.4)--(\sncolwidth*7.5,4.6);
      \draw[line width=1pt](\sncolwidth*7.5,4.6)--(\sncolwidth*10.5,4.6);
    \end{sortingnetwork}
    &
    \begin{sortingnetwork}{5}{0.7}
      \addcomparator45\addcomparator23\nextlayer
      \addcomparator24\addlayer\addcomparator13\nextlayer
      \addcomparator14\addlayer\addcomparator35\nextlayer
      \addcomparator12\addcomparator34\nextlayer
      \addcomparator23
      \draw[line width=1pt](\sncolwidth*10.5,0.6)--(\sncolwidth*10.5,5.4);
      \draw[line width=1pt](\sncolwidth*10.5,5.4)--(\sncolwidth*4.5,5.4);
      \draw[line width=1pt](\sncolwidth*4.5,5.4)--(\sncolwidth*4.5,0.6);
      \draw[line width=1pt](\sncolwidth*4.5,0.6)--(\sncolwidth*10.5,0.6);
    \end{sortingnetwork}
\\
 ($4$-blocks) & ($3$-blocks) & ($2$-block)
 \end{tabular}
 \caption{The $4$-, $3$- and $2$-blocks of the sorting network
   $\{(1,2),(3,4)\},\{(2,4),(3,5)\},\{(1,3),(2,5)\},\{(2,3),(4,5)\},\{(3,4)\}$.}
 \label{fig:blocks}
\end{figure}

The notion of block helps understand the semantics of a sorting
network: after layer $k$, all $k$-blocks are sorted except for at most
one.

\begin{lemma}
  \label{lem:k-block}
  Let $C$ be a sorting network on $n$ channels with depth $d$, and $k<d$.
  For each input $\vec x\in\{0,1\}^n$, there is at most one $k$-block
  that receives both $0$s and $1$s as input.
\end{lemma}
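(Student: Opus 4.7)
My plan is to reduce the statement to a combinatorial property of the $k$-blocks. Fix $\vec{x}^0 \in \{0,1\}^n$ with $r$ ones, so $C(\vec{x}^0) = (0^{n-r},1^r)$. For any $k$-block $B$, every comparator in layers $k+1,\ldots,d$ that touches $B$ has both endpoints in $B$ by the definition of a $k$-block, so the count of $1$s in $B$ is preserved between layer $k$ and layer $d$. Hence this count equals $|B \cap \{n-r+1,\ldots,n\}|$, a quantity depending only on $B$ and $r$, not on the specific input. Consequently, $B$ receives both $0$s and $1$s at layer $k$ if and only if $B$ contains positions both in $\{1,\ldots,n-r\}$ and in $\{n-r+1,\ldots,n\}$, i.e., $B$ \emph{straddles} the boundary $n-r$. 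The lemma thus reduces to proving that at most one $k$-block straddles $n-r$.

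To establish this straddling-uniqueness I would argue by contradiction. Suppose distinct $k$-blocks $B_1, B_2$ both straddle $n-r$. Applied to the sorted input $\vec{y}^0 = (0^{n-r},1^r)$, Lemma~\ref{lem:sorted} yields $\vec{y}^k = \vec{y}^0$, so both blocks are mixed at layer $k$ for this input. I would then perturb $\vec{y}^0$ by swapping the $0$ at the greatest position of $B_1 \cap \{1,\ldots,n-r\}$ with the $1$ at the least position of $B_2 \cap \{n-r+1,\ldots,n\}$, obtaining $\vec{z}^0$ with $r$ ones but with one more $1$ in $B_1$ and one less in $B_2$. Since the counts per block at layer $k$ are fixed solely by $r$, the prefix $L_1,\ldots,L_k$ must redistribute one $1$ from $B_1$ into $B_2$ on $\vec{z}^0$ while leaving $\vec{y}^0$ unchanged. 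Chaining Lemma~\ref{lem:monot} across $\vec{y}^0$, $\vec{z}^0$, $\vec{y}^0 \vee \vec{z}^0$ and $\vec{y}^0 \wedge \vec{z}^0$ should pin down precisely where the redistributed $1$ can land (since comparators can only push $1$s to higher-indexed positions), and by iterating the perturbation over further pairs of positions in $B_1$ and $B_2$, force an incompatibility with the fact that a single comparator network acts uniformly on all inputs.

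The hardest part is formalizing this final incompatibility; I expect it to hinge on showing that the monotonicity-forced landing positions for the redistributed $1$s collide with the fixed-count constraints for some other carefully chosen input. As a fallback, induction on $d-k$ is an option. The base case $k=d-1$ follows immediately from Lemma~\ref{lem:lastlayer}: the last layer consists of comparators of the form $(i,i+1)$, so each $(d-1)$-block is either a singleton or an adjacent pair $\{i,i+1\}$, and such a pair straddles only the single boundary $i$, yielding uniqueness. The inductive step would analyze how the comparators of layer $k+1$ merge $(k+1)$-blocks into $k$-blocks while preserving the straddling-uniqueness property; this is delicate because layer $k+1$ can shift $1$s between sub-blocks of a common $k$-block, so the inductive hypothesis at layer $k+1$ does not directly transfer to layer $k$ without additional care.
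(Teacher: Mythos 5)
Your first paragraph reproduces the paper's argument exactly: every comparator in layers $k+1,\dots,d$ that touches a $k$-block has both endpoints inside it, so the number of $1$s each block carries is invariant from layer $k$ to the output, and hence a block is mixed at layer $k$ precisely when it straddles the boundary $n-r$ of the sorted output $0^{n-r}1^r$. Up to that point you and the paper agree. The genuine gap is everything after: uniqueness of the straddling block is the entire remaining content of the lemma, and you do not prove it. The perturbation-plus-monotonicity argument stops at ``I expect it to hinge on\dots'', and the fallback induction has an inductive step that you yourself flag as unresolved; worse, that step is essentially Lemma~\ref{lem:layerbeforelast} and Theorem~\ref{thm:layerbeforelast} (comparators below layer $k$ only join adjacent blocks), which the paper derives \emph{from} the present lemma, so an induction built on it risks circularity. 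Reducing the statement to its hardest sub-claim and then listing strategies for that sub-claim is not a proof.

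What you are missing is that no heavy machinery is intended here. The paper finishes in one sentence: viewing the $k$-blocks as ordered from top to bottom (they are sets of consecutive channels), a mixed block forces every block above it to receive only $0$s and every block below it to receive only $1$s, because a disjoint family of intervals has at most one member containing both channels $n-r$ and $n-r+1$, and only such a member can be mixed. Your implicit worry is not baseless --- for a non-interval partition, straddling-uniqueness genuinely fails (e.g.\ $\{1,3\}$ and $\{2,4\}$ both straddle the boundary $2$ on four channels), and consecutiveness of $k$-blocks is stated in the paper only after Theorem~\ref{thm:layerbeforelast} --- but your proposal neither invokes consecutiveness nor supplies a working substitute, so the lemma remains unproved as written.
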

\begin{proof}
  From the definition of $k$-block, it follows that values cannot move
  between two different $k$-blocks.
  Therefore, for every input, if there is a $k$-block that
  receives both $0$s and $1$s as inputs, then all $k$-blocks above it must
  receive only $0$s and those below it must receive only $1$s in order for
  the output to be sorted.
\end{proof}
For example, in Figure~\ref{fig:blocks} (left), if the $4$-block $\{3,4\}$
receives one $0$ and one $1$, then the $4$-blocks $\{1\}$ and $\{2\}$ must
receive a $0$ and the $4$-block $\{5\}$ must receive a $1$.

\begin{lemma}
  \label{lem:layerbeforelast}
  Let $C$ be a non-redundant sorting network on $n$ channels with
  depth $d$.
  Then all comparators in layer $d-1$ connect adjacent blocks of $C$.
\end{lemma}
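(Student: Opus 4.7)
The plan is to argue by contradiction. Assume that $C$ is a non-redundant sorting network of depth $d$ and that some comparator $(i,j)\in L_{d-1}$ with $i<j$ has its endpoints in blocks $B_i$ and $B_j$ that are not adjacent, so that there exists a block $B_m$ of $C$ lying strictly between $B_i$ and $B_j$ in the linear order on blocks. Since $C$ contains no redundant comparators, $(i,j)$ itself is non-redundant, so I fix an input $\vec x$ for which $\vec x^{d-2}_i=1$ and $\vec x^{d-2}_j=0$; after applying layer $d-1$ this gives $\vec x^{d-1}_i=0$ and $\vec x^{d-1}_j=1$.

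The second step is to apply Lemma~\ref{lem:k-block} with $k=d-1$ to describe the shape of $\vec x^{d-1}$. Because, by Lemma~\ref{lem:lastlayer}, the last layer only sorts each block internally, $\vec x^{d-1}$ must be \emph{block-sortable}: reading its blocks in order, we see a (possibly empty) prefix of all-$0$ blocks, at most one mixed block (necessarily of size~$2$, with exactly one $0$ and one $1$), and a (possibly empty) suffix of all-$1$ blocks. The constraints $\vec x^{d-1}_i=0$, $\vec x^{d-1}_j=1$ together with $B_i<B_m<B_j$ then leave only three regimes: (a)~$B_i$ is the mixed block, which forces $B_m$ to be all~$1$s; (b)~$B_j$ is the mixed block, which forces $B_m$ to be all~$0$s; and (c)~$B_i$ is all~$0$s and $B_j$ is all~$1$s, in which case $B_m$ may be all~$0$s, all~$1$s, or itself the mixed block.

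The third step is to derive a contradiction in each regime by exhibiting an input $\vec z$ whose image $C(\vec z)$ is not sorted. The construction is inspired by the proof of Lemma~\ref{lem:lastlayer}: starting from $\vec x$, I flip one bit at a carefully chosen channel in $B_i$, $B_m$, or $B_j$ to obtain $\vec z$, and use Lemma~\ref{lem:monot}, applied layer by layer through $L_1,\ldots,L_{d-2}$, to control how $\vec z^{d-2}$ compares with $\vec x^{d-2}$ at the three distinguished channels $i$, $m$ and $j$. The flip is chosen so that in $\vec z^{d-1}$ either two different blocks both become mixed, or one of $B_i$, $B_m$, $B_j$ carries a value that no longer respects the monotone structure imposed by Lemma~\ref{lem:k-block}; in either case the last layer, which only acts within blocks, cannot repair the configuration, and $C(\vec z)$ fails to be sorted.

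I expect the main obstacle to be regime~(c), where $B_m$ can itself play the role of the mixed block: here there is more freedom in the location of the transition from $0$s to $1$s in $\vec x^d$, so it is less clear which bit-flip will be guaranteed to propagate through layers $L_1,\ldots,L_{d-2}$ and produce the right inconsistency at layer $d-1$. Resolving this case will probably require a sub-analysis that splits on the size of $B_m$ (one or two channels) and on whether the number $r$ of $0$s in $\vec x^d$ places the $0/1$ boundary inside $B_i$, inside $B_m$, or inside $B_j$, exploiting the strict separation $B_i<B_m<B_j$ to force that a single bit-flip simultaneously disturbs the multisets of two different blocks in incompatible ways.
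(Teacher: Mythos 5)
Your setup and your case analysis are essentially the paper's: assume a comparator $c=(i,j)$ in layer $d-1$ joins non-adjacent blocks, take a witness input for non-redundancy, and use Lemma~\ref{lem:k-block} to constrain the block pattern at layer $d-1$. Your three regimes correspond exactly to the paper's five cases (your regime~(c) bundles the three sub-cases in which neither $B_i$ nor $B_j$ is the mixed block). Up to that point the proposal is sound.

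The gap is that the decisive third step is never carried out, and the mechanism you propose for it is shakier than you acknowledge. You plan to flip one bit of $\vec x$ and ``use Lemma~\ref{lem:monot}, applied layer by layer, to control how $\vec z^{d-2}$ compares with $\vec x^{d-2}$ at the three distinguished channels'' --- but Lemma~\ref{lem:monot} only gives the one-sided bound $\vec z^{d-2}\geq\vec x^{d-2}$ when $\vec z\geq\vec x$; it tells you nothing about \emph{which} channel absorbs the extra $1$ after $d-2$ layers, so you cannot steer the flipped bit into $B_i$, $B_m$ or $B_j$ as your construction requires. The paper's proof sidesteps exactly this difficulty: rather than engineering one bad input, it shows that when passing between neighbouring configurations by single bit flips, monotonicity pins one of the two inputs of $c$ (the top $1$ stays a $1$ when adding ones) while the sorting property, via Lemma~\ref{lem:k-block}, pins the other; hence \emph{every} input with the relevant number of zeroes feeds $c$ an unsorted pair --- in particular the sorted input, contradicting Lemma~\ref{lem:sorted}. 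That quantification over all inputs with a fixed number of zeroes, rather than the exhibition of a single unsorted output, is the missing idea; without it your regime~(c), which you yourself flag as unresolved, does not close.
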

\begin{proof}
  The proof is similar to that of Lemma~\ref{lem:lastlayer}, but now
  considering blocks instead of channels.
  Let $c$ be a comparator in layer $d-1$ of $C$ that does not connect
  adjacent blocks of $C$.
  Observe that $c$ cannot connect two channels in the same block, as
  then there would be two copies of $c$ in the last two layers, and
  the second one would be redundant.

  Since $c$ is not redundant, there must be some input $\bar x$ that
  provides $c$ with input $1$ on its top channel and $0$ on its bottom
  channel.
  The situation is depicted below, where $X$ and $Z$ are blocks, and
  $Y$ is the set of channels in between.
  According to Lemma~\ref{lem:k-block}, there are five possible cases
  for $X$, $Y$ and $Z$,
  depending on the number of $0$s in $\bar x$.

  \begin{center}
    \begin{sortingnetwork}{3}{0.7}
      \nodelabel{0,,1}
      \addcomparator13
      \addlayer
      \nodelabel{Z,Y,X}
    \end{sortingnetwork}
    \qquad
    \begin{tabular}[b]{c|c|c|c|c|c}
      X & all $0$s & all $0$s & all $0$s & all $0$s & mixed \\
      Y & all $0$s & all $0$s & mixed & all $1$s & all $1$s \\
      Z & mixed & all $1$s & all $1$s & all $1$s & all $1$s \\ \midrule
     &$(a)$&$(b)$&$(c)$&$(d)$&$(e)$
    \end{tabular}
  \end{center}

  Suppose that input $\vec x$ leads to case~$(a)$.
  By changing the appropriate number of $0$s in $\vec x$ to $1$s, we
  can find an input $\vec y$ that leads to case $(b)$, since again by
  monotonicity of $C$ $\vec y$ cannot place a $0$ to the top input of
  $c$.
  Likewise, we can reduce $(e)$ to $(d)$.
  But now we can move between $(b)$, $(c)$ and $(d)$ by changing one
  bit of the word at a time.
  By Lemma~\ref{lem:monot}, this must keep either the top $1$ input of
  $c$ or the lower $0$, while the other value is kept by the fact that
  $C$ is a sorting network.
  As in Lemma~\ref{lem:lastlayer}, this proves that this configuration
  occurs for all words with the same number of $0$s, which is absurd
  since it cannot happen for sorted inputs.
\end{proof}

Combining this result with Lemma~\ref{lem:lastlayer} we obtain the
explicit configurations that can occur in a sorting network.
\begin{corollary}
  \label{cor:layerbeforelast}
  Let $C$ be a non-redundant sorting network on $n$ channels with
  depth $d$.
  Then every comparator $(i,j)$ in layer $d-1$ of $C$ satisfies
  $j-i\leq 3$.
  Furthermore, if $j=i+2$, then either $(i,i+1)$ or $(i+1,i+2)$ occurs
  in the last layer;
  and if $j=i+3$, then both $(i,i+1)$ and $(i+2,i+3)$ occur in the
  last layer.
\end{corollary}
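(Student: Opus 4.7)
The plan is to combine Lemmas \ref{lem:lastlayer} and \ref{lem:layerbeforelast} with a four-case enumeration on the shapes of the two adjacent blocks involved. By Lemma \ref{lem:lastlayer}, every comparator in layer $d$ has the form $(k,k+1)$, and since each channel can participate in at most one comparator per layer, every block of $C$ (that is, every $(d-1)$-block) is either a singleton $\{k\}$, where channel $k$ is unused in layer $d$, or a consecutive pair $\{k,k+1\}$ joined by a last-layer comparator $(k,k+1)$. In particular, each block spans at most two contiguous channels.

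Now let $(i,j)$ be a comparator in layer $d-1$. By Lemma \ref{lem:layerbeforelast}, $i$ and $j$ belong to two adjacent blocks $B_1$ and $B_2$, with $B_1$ lying immediately above $B_2$ in the channel ordering. I would split into the four possibilities for $(B_1, B_2)$:
\begin{itemize}
\item $B_1=\{i\}$ and $B_2=\{j\}$. Adjacency forces $j=i+1$, so $j-i=1$.
\item $B_1=\{i\}$ and $B_2=\{j-1,j\}$ with $(j-1,j)$ in layer $d$. Adjacency gives $i+1=j-1$, hence $j=i+2$, and layer $d$ contains $(i+1,i+2)$.
\item $B_1=\{i,i+1\}$ with $(i,i+1)$ in layer $d$ and $B_2=\{j\}$. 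Adjacency gives $j=i+2$, and layer $d$ contains $(i,i+1)$.
\item $B_1=\{i,i+1\}$ and $B_2=\{j-1,j\}$, with both $(i,i+1)$ and $(j-1,j)$ in layer $d$. Adjacency gives $i+2=j-1$, hence $j=i+3$, and layer $d$ contains both $(i,i+1)$ and $(i+2,i+3)$.
\end{itemize}

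In every case $j-i\leq 3$; the cases $j=i+2$ give exactly the disjunction ``$(i,i+1)$ or $(i+1,i+2)$ lies in the last layer'', and the case $j=i+3$ gives the conjunction ``both $(i,i+1)$ and $(i+2,i+3)$ lie in the last layer''. Since the corollary is already pre-digested by the two preceding lemmas, there is no real obstacle; the only point that requires care is using Lemma \ref{lem:lastlayer} first to argue that blocks have size at most two, so that the case list above is genuinely exhaustive.
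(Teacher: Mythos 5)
Your approach is exactly the one the paper intends: the corollary is stated immediately after Lemma~\ref{lem:layerbeforelast} with no separate proof, being presented as a direct combination of Lemma~\ref{lem:lastlayer} (each block is a singleton or an adjacent pair joined by a last-layer comparator) and Lemma~\ref{lem:layerbeforelast} (layer-$(d-1)$ comparators join adjacent blocks). One inaccuracy, though: your four-case list is not exhaustive, contrary to your closing remark. You tacitly assume that when $i$ lies in a two-channel block that block is $\{i,i+1\}$, and when $j$ lies in a two-channel block that block is $\{j-1,j\}$; but a priori the block of $i$ could be $\{i-1,i\}$ and the block of $j$ could be $\{j,j+1\}$, giving up to nine configurations rather than four. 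The omitted configurations are harmless --- adjacency of the blocks then forces $j-i\leq 2$, and in the sub-cases where $j=i+2$ the last layer still contains $(i+1,i+2)$ or $(i,i+1)$ as required, so the stated disjunction survives --- but for the enumeration to carry the proof they must either be listed or be ruled out by an explicit argument, and no such argument is available (nothing prevents, say, $(2,3)$ in layer $d-1$ with $(1,2)$ and $(3,4)$ in layer $d$). With that completion the proof is correct and coincides with the paper's.
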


The sorting networks in Figure~\ref{fig:tight} show that the bound $j-i\leq 3$ is tight.
\begin{figure}
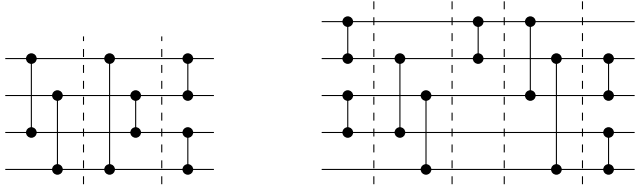

  \centering
  \begin{sortingnetwork}4{0.7}
    \addcomparator24
    \addlayer
    \addcomparator13
    \nextlayer
    \addcomparator14
    \addlayer
    \addcomparator23
    \nextlayer
    \addcomparator12
    \addcomparator34
  \end{sortingnetwork}
  \qquad
  \begin{sortingnetwork}5{0.7}
    \addcomparator45
    \addcomparator23
    \nextlayer
    \addcomparator24
    \addlayer
    \addcomparator13
    \nextlayer
    \addcomparator45
    \nextlayer
    \addcomparator35
    \addlayer
    \addcomparator14
    \nextlayer
    \addcomparator12
    \addcomparator34
  \end{sortingnetwork}
  \caption{Sorting networks containing a comparator $(i,i+3)$ in their penultimate layer.}
  \label{fig:tight}
\end{figure}

\begin{theorem}
  \label{thm:layerbeforelast}
  If $C$ is a non-redundant sorting network on $n$ channels and
  $(i,j)$ is a comparator at layer $k$ of $C$, then $i$ and $j$ are
  either in the same $k$-block or in adjacent $k$-blocks of~$C$.
\end{theorem}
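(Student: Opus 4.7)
The plan is to generalize the proof of Lemma~\ref{lem:layerbeforelast} from the specific layer $d-1$ to an arbitrary layer $k$. The key observation is that Lemma~\ref{lem:k-block} already holds at the level of arbitrary $k$-blocks, so all ingredients of the original argument remain available.

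First, I argue by contradiction: suppose $(i,j)$ is a comparator at layer~$k$ with $i\in B_i$ and $j\in B_j$, where $B_i\neq B_j$ are non-adjacent $k$-blocks, meaning there is a third $k$-block $B$ containing some channel $\ell$ with $i<\ell<j$. By non-redundancy of $(i,j)$, there is an input $\vec x^0$ for which $\vec x^{k-1}_i=1$ and $\vec x^{k-1}_j=0$; after layer~$k$ fires we therefore have $\vec x^k_i=0$ and $\vec x^k_j=1$.

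Second, I invoke Lemma~\ref{lem:k-block} applied directly to $C$ and input $\vec x^0$: at most one $k$-block of $C$ carries mixed values in $\vec x^k$, so in particular at most one of $B_i$, $B$, $B_j$ is mixed, while $B_i$ has a $0$ at channel $i$ and $B_j$ has a $1$ at channel $j$. This yields precisely the same five-case table $(a)$--$(e)$ as in the proof of Lemma~\ref{lem:layerbeforelast}, with $B_i$, $B$, and $B_j$ playing the roles of $X$, $Y$, and $Z$ respectively.

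Third, I use Lemma~\ref{lem:monot} applied to every prefix of $C$ (each prefix is itself a comparator network and hence monotone) to interpolate between these cases by flipping input bits one at a time. The crucial observation -- that the top input of $(i,j)$ at layer~$k$ stays $1$ under $0 \mapsto 1$ flips, and symmetrically that the bottom input stays $0$ under $1 \mapsto 0$ flips -- carries over from Lemma~\ref{lem:layerbeforelast} verbatim. It follows that $(i,j)$ must fire at layer~$k$ on \emph{every} input with the same number of $0$s as $\vec x^0$, which contradicts Lemma~\ref{lem:sorted}: the sorted input with that many $0$s flows through $C$ without any swap.

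The main obstacle is that the ``middle region'' of channels strictly between $i$ and $j$ may draw from several $k$-blocks, rather than only $B$ as in the $k=d-1$ case. However, the ``at most one mixed'' conclusion of Lemma~\ref{lem:k-block} forces all such intermediate blocks other than the unique mixed one to be homogeneous, so the coarse trichotomy (all~$0$s / all~$1$s / mixed) applied to the aggregate middle region still captures every possibility compatible with a sorted final output. Once this book-keeping is in place, the chain of case transitions already worked out in Lemma~\ref{lem:layerbeforelast} goes through unchanged, completing the contradiction.
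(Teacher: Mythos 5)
Your proposal is correct and follows essentially the same route as the paper, whose entire proof of this theorem is the remark that the argument of Lemma~\ref{lem:layerbeforelast} carries over with $k$-blocks in place of blocks; you have simply written out that generalization in full, including the (already implicit) point that the middle region $Y$ may span several homogeneous blocks without affecting the five-case analysis.
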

\begin{proof}
  As for Lemma~\ref{lem:layerbeforelast}, considering $k$-blocks instead of blocks.
\end{proof}

As a consequence of Theorem~\ref{thm:layerbeforelast}, all $k$-blocks in a
non-redundant sorting network are sets of consecutive channels.

\begin{corollary}
  \label{cor:blocksize}
  Let $C$ be a non-redundant sorting network on $n$ channels and $B$
  be a $k$-block of $C$ containing $m$ comparators.
  Then $B$ consists of at most $m+1$ channels.
\end{corollary}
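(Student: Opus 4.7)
The plan is to read the corollary as a statement about a graph, namely the graph $G_B$ with vertex set $B$ whose edges are exactly the $m$ comparators in $B$ (i.e., those that connect two channels of $B$ in some layer $k'>k$). By the very definition of a $k$-block, any two channels $i,j\in B$ are linked by a sequence of channels $i=x_0,\ldots,x_\ell=j$ such that each consecutive pair $\{x_i,x_{i+1}\}$ is a comparator of $C$ in some layer after $k$; since $B$ is a $k$-block, all intermediate channels $x_i$ lie in $B$, so this sequence is in fact a path in $G_B$. Hence $G_B$ is connected.

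From here the statement is a purely combinatorial fact: a connected (simple) graph on $|B|$ vertices has at least $|B|-1$ edges, so $m\geq|B|-1$, which gives $|B|\leq m+1$ as required.

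There is essentially no obstacle; the only point that might warrant a brief remark is that $G_B$ really is simple, i.e., we do not double-count a comparator. This is immediate because each layer is a set of disjoint pairs, and any two comparators occurring in distinct layers after $k$ yield distinct pairs precisely when we count them separately in $m$; the bound only requires $m$ to be at least the number of edges of some connected spanning subgraph of $G_B$, which holds whether or not we identify parallel edges. Thus the one-line argument ``connected graph on $|B|$ vertices has $\geq|B|-1$ edges'' suffices.
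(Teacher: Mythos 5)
Your proof is correct, and it takes a genuinely different and more elementary route than the paper. The paper proves the bound by induction on $m$: it peels off a comparator $c$ from layer $k+1$, places it in its own layer, and invokes Theorem~\ref{thm:layerbeforelast} to conclude that $c$ either stays inside a single $(k+1)$-block or merges two adjacent $(k+1)$-blocks, after which the induction hypothesis gives the count. That argument leans on the non-redundancy and sorting hypotheses (through Theorem~\ref{thm:layerbeforelast}) and is naturally entangled with the stronger structural fact that $k$-blocks are intervals of consecutive channels. Your argument instead observes that, by the very definition of a $k$-block (maximality forces the intermediate channels of the witnessing paths into $B$, as you note), $B$ is a connected component of the graph whose edges are the comparators after layer $k$, so $m$ is at least the number of edges of a connected graph on $|B|$ vertices, hence $m\geq|B|-1$. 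This is shorter, needs neither non-redundancy nor the assumption that $C$ sorts, and your handling of possible parallel edges is the right remark to make; what it does not give you is the interval structure of the blocks, which the paper's route delivers as a by-product and uses elsewhere. Incidentally, your version also sidesteps a small slip in the paper's own proof, where case~(i) asserts ``at most $m-2$ channels'' when the induction hypothesis actually yields at most $m$.
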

\begin{proof}
  By induction on $m$.
  The case $m=1$ is simply Corollary~\ref{cor:lastlayer}.
  If $m>1$, let $c$ be a comparator at layer $k+1$ of $C$ and consider
  the sorting network $C'$ obtained by placing $c$ into an individual
  layer.
  By Theorem~\ref{thm:layerbeforelast}, either (i)~$c$ connects
  channels within a $(k+1)$-block $B'$ of $C'$ or (ii)~$c$ connects
  channels in adjacent $(k+1)$-blocks $B_1$ and $B_2$ of $C'$.
  In case~(i), block $B'$ contains $m-1$ comparators, hence by
  induction hypothesis it consists of at most $m-2$ channels; since
  $c$ does not add any new channels, the thesis also holds for $B$.
  In case~(ii), blocks $B_1$ and $B_2$ contain $m_1$ and $m_2$
  comparators each, with $m_1+m_2=m-1$, and by induction hypothesis
  they consist of at most $m_1+1$ and $m_2+1$ channels, respectively.
  Since $B$ consists of the union of these sets, it has at most
  $(m_1+1)+(m_2+1)=m+1$ elements.
\end{proof}

\subsection{Co-Saturation}
\label{sec:co-sat}

Using the results from Section~\ref{sec:props}, we can reduce the search space
of possible sorting networks of a given depth simply by restricting to
comparator networks satisfying the necessary conditions presented, namely
Lemma~\ref{lem:lastlayer} and Corollary~\ref{cor:layerbeforelast}.
However, the successful strategies
in~\cite{BZ14,ourICTAIpaper,Parberry89,Parberry91} all focus on also imposing
\emph{sufficient} conditions on those networks.
This is expressed by results of the form ``if there is a sorting
network, then there is one satisfying this property''.

We now follow this approach pursuing the idea of saturation
from~\cite{BCCSZ14}: how many (redundant) comparators can we safely
add to the last layers of a sorting network?
We will show how to do this in a structured way that actually reduces
the number of possibilities for the last two layers.
Again, we capitalize on the observation that redundant comparators do
not change the function represented by a comparator network and can,
thus, be removed or added at will.

\begin{lemma}
  \label{lem:llnf}
  Let $C$ be a sorting network on $n$ channels.
  There is a sorting network $N$ of the same depth as $C$ whose last
  layer: (i)~only contains comparators between adjacent channels;
  and (ii)~does not contain two adjacent unused channels.
\end{lemma}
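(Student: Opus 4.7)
The plan is to construct $N$ by modifying only the last layer of $C$, in two phases. Write $C = C_0; L_d$ with $C_0 = L_1, \ldots, L_{d-1}$. Phase one deletes every non-adjacent comparator from $L_d$; phase two greedily adds a comparator $(i, i{+}1)$ whenever channels $i$ and $i{+}1$ are both unused in the current last layer. I need to check that each phase preserves the sorting property and that the depth remains $d$.

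For phase one, the key observation I would establish is that every comparator $(i,j) \in L_d$ with $j > i+1$ is redundant in $C$. This is really the content of the proof of Lemma~\ref{lem:lastlayer}: the case analysis on configurations~$(a)$ and~$(b)$ just before the last layer, together with the generalization to $k \geq 2$, derives a contradiction with Lemma~\ref{lem:sorted} from the single hypothesis that the specific non-adjacent last-layer comparator is non-redundant, and never invokes global non-redundancy of the ambient network. This is the most delicate point of the proof: one must inspect that argument to confirm the contradiction is purely local to the candidate comparator. Once granted, Lemma~\ref{lem:redundant} lets me delete all such comparators from $L_d$ without changing the function computed, yielding a layer $L_d^-$ of only adjacent comparators, i.e.\ property~(i).

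For phase two, each insertion of $(i, i{+}1)$ with both channels unused preserves sorting by an easy argument: at that moment $C_0; L_d^{\mathrm{cur}}$ is already a sorting network, so $L_d^{\mathrm{cur}}(C_0(\vec x))$ is sorted for every $\vec x$; since $(i, i{+}1)$ is disjoint from every comparator in $L_d^{\mathrm{cur}}$, the enlarged layer is equivalent to first applying $L_d^{\mathrm{cur}}$ and then $(i, i{+}1)$, which acts on an already-sorted adjacent pair and does nothing. The procedure terminates (the number of unused channels strictly decreases) in a layer $L_d^+$ with no two adjacent unused channels, establishing property~(ii). Finally, $L_d^+$ is non-empty for $n \geq 2$: either $L_d^-$ was already non-empty, or every channel is unused and the greedy step immediately inserts $(1,2)$; the case $n=1$ is vacuous. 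Hence $N = C_0; L_d^+$ has depth exactly $d$ and satisfies both stated properties.
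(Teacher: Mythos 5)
Your proof is correct, and it reorganizes the argument in a way that genuinely differs from the paper's in its first phase. The paper first applies Lemma~\ref{lem:redundant} to strip \emph{all} redundant comparators from the whole network, obtaining a non-redundant sorting network $S$ to which Lemma~\ref{lem:lastlayer} applies as a black box; only then does it re-insert comparators $(j,j+1)$ between adjacent unused channels. You instead leave layers $1,\dots,d-1$ untouched and delete only the non-adjacent comparators of the last layer, which requires the strengthened reading of Lemma~\ref{lem:lastlayer} that you correctly flag as the delicate point: in \emph{any} sorting network, a last-layer comparator $(i,j)$ with $j>i+1$ must be redundant. That reading is sound --- the paper's proof of Lemma~\ref{lem:lastlayer} invokes non-redundancy only for the single comparator $c$ under scrutiny, and otherwise relies on the sorting property together with Lemma~\ref{lem:monot} and Lemma~\ref{lem:sorted} --- so your phase one goes through. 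Your second phase coincides with the paper's (the added comparator between two unused adjacent channels is redundant because it acts last on an already sorted output). What your variant buys is that $N$ agrees with $C$ on every layer except the last; in particular it sidesteps a small technicality in the paper's route, namely that global removal of redundant comparators could in principle empty an intermediate layer and thus formally alter the depth. The price is that you cannot cite Lemma~\ref{lem:lastlayer} as stated but must reopen its proof, or restate it in contrapositive form for arbitrary sorting networks.
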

\begin{proof}
  Let $C$ be a sorting network on $n$ channels.
  By Lemma~\ref{lem:redundant}, we can eliminate all redundant
  comparators from $C$ to obtain a sorting network $S$.
  By Lemma~\ref{lem:lastlayer}, all comparators in the last layer of
  $S$ are of the form $(i,i+1)$.
  Let $j$ be such that $j$ and $j+1$ are unused in the last layer of
  $S$; since $S$ is a sorting network, this means that comparator
  $(j,j+1)$ is redundant, and we can add it to the last layer of $S$.
  Repeating this process for $j=1,\ldots,n$ we obtain a sorting
  network $N$ that satisfies both desired properties.
\end{proof}
We say that a sorting network satisfying the conditions of
Lemma~\ref{lem:llnf} is in \emph{last layer normal form} (llnf).

\begin{theorem}
  \label{thm:llnf}
  The number of possible last layers in llnf on $n$ channels is
  $K_n=P_{n+5}$, where $P_n$ denotes the Padovan sequence, defined as
  $P_0=1$, $P_1=P_2=0$ and $P_{n+3}=P_n+P_{n+1}$.
\end{theorem}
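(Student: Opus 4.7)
The plan is to encode each last layer in llnf on $n$ channels as a tiling of the sequence $1, 2, \ldots, n$ by dominoes (length-two tiles, corresponding to comparators $(i, i+1)$) and singletons (length-one tiles, corresponding to unused channels), subject to the constraint that no two singletons are adjacent. Condition~(i) of llnf forces every comparator to correspond to a domino between adjacent channels, and condition~(ii) is exactly the no-adjacent-singletons restriction, so the bijection between llnf last layers and such constrained tilings is immediate.

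Next, I would derive a recurrence for $K_n$ by case analysis on what covers channel~$n$. If channel $n$ is the right end of a domino, that domino must be $(n-1,n)$, and the remaining channels $1, \ldots, n-2$ may be filled by any valid tiling, giving $K_{n-2}$ choices. Otherwise channel $n$ is a singleton; by the no-adjacent-singletons constraint, channel $n-1$ cannot also be a singleton, so it must belong to a domino, which is forced to be $(n-2,n-1)$. This leaves any valid tiling of channels $1, \ldots, n-3$, contributing $K_{n-3}$ choices. Summing, we obtain $K_n = K_{n-2} + K_{n-3}$ for all sufficiently large $n$.

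For base cases I would check $K_1 = 1$ (the single channel must be unused), $K_2 = 1$ (the only valid layer is the comparator $(1,2)$, since two adjacent unused channels are forbidden), and $K_3 = 2$ (either $(1,2)$ with channel~$3$ unused, or channel~$1$ unused with $(2,3)$). Turning to the Padovan side, the defining recurrence $P_{n+3} = P_n + P_{n+1}$ rewrites as $P_{n+5} = P_{n+2} + P_{n+3}$, so the sequence $n \mapsto P_{n+5}$ satisfies the same recurrence as $K_n$. Direct computation gives $P_6 = 1$, $P_7 = 1$, $P_8 = 2$, matching the three base cases, and a routine induction yields $K_n = P_{n+5}$ for all $n \geq 1$.

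I do not foresee a serious obstacle here: the combinatorial model is clean and the case split on the rightmost tile is forced. The one place where care is required is the alignment between the indexing of $K_n$ and that of the Padovan sequence, since an off-by-one would yield a wrong closed form $P_{n+k}$; I would therefore verify at least two base values against $P_{n+5}$ explicitly before invoking the induction.
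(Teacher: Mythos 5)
Your proof is correct and takes essentially the same approach as the paper: both derive the recurrence $K_n = K_{n-2} + K_{n-3}$ by a forced case split on the tile covering an end channel (you work from channel $n$, the paper from channel $1$), verify the base cases $K_1=1$, $K_2=1$, $K_3=2$ against $P_6, P_7, P_8$, and conclude by induction. The tiling formulation is a pleasant repackaging but not a different argument.
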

\begin{proof}
  Let $K^+_n$ be the number of layers in llnf that begin with the
  comparator $(1,2)$, and $K^-_n$ the number of those where channel
  $1$ is free. Then $K_n=K^+_n+K^-_n$.
  Let $n>3$.
  If a layer in llnf begins with a comparator, then there are
  $K_{n-2}$ possibilities for the remaining channels; if it begins
  with a free channel, then there are $K^+_{n-1}$ possibilities for
  the remaining channels.
  Therefore $K_n = K^+_n+K^-_n = K_{n-2}+K^+_{n-1} = K_{n-2}+K_{n-3}$.
  There exist one last layer on $1$ channel (with no comparator),
  one on $2$ channels (with one comparator between them) and two on $3$ channels
  (one comparator between either the top two or the bottom two channels),
  so $K_1=P_6$, $K_2=P_7$ and $K_3=P_8$.
  From the recurrence it follows that $K_n=P_{n+5}$.
\end{proof}

Sequence $K_n$ grows much slower than the total number $L_n$ of
non-redundant last layers identified in Theorem~\ref{thm:lastlayer}.
For example, $K_{17}=86$ whereas $L_{17}=2583$.

Given that the last layer is required to be in llnf, we can also study
the previous layer.
By Lemma~\ref{lem:layerbeforelast}, we know that every block can only
be connected to the adjacent ones; again we can \emph{add} redundant
comparators to reduce the number of possibilities for the last two
layers.

\begin{lemma}
  \label{lem:twolast}
  Let $C$ be a sorting network of depth $d$ in llnf.  Let $i<j$ be two
  channels that are unused in layer $d-1$ and that belong to different
  blocks.  Then adding the comparator $(i,j)$ to layer $d-1$ of $C$
  still yields a sorting network.
\end{lemma}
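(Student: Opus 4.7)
The plan is to show that the added comparator $(i,j)$ is \emph{redundant} in its location, so that by (the trivial converse direction of) Lemma~\ref{lem:redundant} the extended network computes the same function as $C$ and is therefore still a sorting network. Concretely, I will prove that for every input $\vec x \in \{0,1\}^n$, the values entering layer $d-1$ satisfy $v^{d-2}_i \le v^{d-2}_j$, so that the new comparator swaps nothing.

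First I would make the trivial reduction that, because $i$ and $j$ are both unused in layer $d-1$ of $C$, we have $v^{d-2}_i = v^{d-1}_i$ and $v^{d-2}_j = v^{d-1}_j$. Hence it suffices to establish $v^{d-1}_i \le v^{d-1}_j$ for every input, i.e.\ to reason about the values \emph{entering the last layer} (equivalently, the inputs to the $(d-1)$-blocks). Next I invoke the structural fact recorded immediately after Theorem~\ref{thm:layerbeforelast}: in a non-redundant sorting network -- and $C$ is one after trimming, with llnf form only adding back comparators within blocks -- each $(d-1)$-block consists of a set of consecutive channels. Since $i < j$ lie in different blocks $B_i \ne B_j$, the entire block $B_i$ sits strictly above the entire block $B_j$.

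Now I would carry out a short case analysis powered by Lemma~\ref{lem:k-block} with $k = d-1$: for the fixed input $\vec x$, at most one $(d-1)$-block receives a mixture of $0$s and $1$s, and if one does, all blocks above it receive only $0$s while all blocks below it receive only $1$s (else $C(\vec x)$ cannot be sorted). The cases are:
\begin{itemize}
\item If $B_i$ is the mixed block, then $B_j$ lies below it and receives only $1$s, giving $v^{d-1}_j = 1 \ge v^{d-1}_i$.
\item If $B_j$ is the mixed block, then $B_i$ lies above it and receives only $0$s, so $v^{d-1}_i = 0 \le v^{d-1}_j$.
\item If some third block is mixed and lies above $B_i$, both $B_i$ and $B_j$ receive only $1$s; if it lies between them, $B_i$ is all $0$s and $B_j$ is all $1$s; if it lies below $B_j$, both receive only $0$s.
\item If no block is mixed, every block is pure and, by sortedness of $C(\vec x)$, the all-$0$ blocks precede the all-$1$ blocks; since $B_i$ is above $B_j$, we cannot have $v^{d-1}_i = 1$ and $v^{d-1}_j = 0$.
\end{itemize}
In every case $v^{d-1}_i \le v^{d-1}_j$, which completes the redundancy argument.

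I expect the only subtle step to be making sure that the blocks of the enlarged network are exactly the same as those of $C$ -- but this is immediate, because blocks are defined purely by comparators in layers strictly later than $d-1$, and the only change is in layer $d-1$ itself. Beyond that, the proof is essentially a case-by-case application of Lemma~\ref{lem:k-block}, so the only real care needed is to handle all positions (above, between, below, or coinciding with $B_i$ and $B_j$) of the potentially mixed block.
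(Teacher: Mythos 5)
Your proof is correct and follows essentially the same route as the paper's: both reduce the claim to showing that the added comparator is redundant because the value entering channel $i$ at layer $d-1$ is always at most the value entering channel $j$, which follows from values being confined to their blocks and the output of $C$ being sorted. The paper dispatches this with a direct two-line contradiction (a $1$ on channel $i$ and a $0$ on channel $j$ would leave a $1$ in the upper block and a $0$ in the lower block of the output), whereas you reach the same conclusion by an exhaustive case analysis via Lemma~\ref{lem:k-block}.
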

\begin{proof}
  Suppose there is an input $\bar x$ such that channel $i$ carries a
  $1$ at layer $d-1$, and channel $j$ carries a $0$ at that same
  layer.
  Since neither channel is used, their corresponding blocks will
  receive these values.
  But then $C(\bar x)$ has a $1$ in a channel in the block containing
  $i$ and a $0$ in the block containing $j$, and since $i<j$ this
  sequence is not sorted by $C$, contradicting the assumption that $C$
  is a sorting network.
  Therefore the comparator $(i,j)$ at layer $d-1$ of $C$ is redundant,
  and can be added to this network.
\end{proof}

Incidentally, this lemma provides a partial answer to a problem
also posed in Exercise~5.3.4.21~\cite{Knuth73}: when can we add
comparators to a sorting network while keeping it sorting?
It was already known that this can always be done in the first and
last layers, while a result by N.G.~de Bruijn~\cite{deBruijn74} states
that this can always be done in networks that only contain comparators
of the form $(i,i+1)$.
Lemma~\ref{lem:twolast} gives a general sufficient condition, and it
implies in particular that comparators can also always be added
between any two free channels in the penultimate layer of any sorting
network.

\begin{lemma}
  \label{lem:onetwo}
  Let $C$ be a sorting network of depth $d$ in llnf.
  Suppose that there is a comparator $(i,i+1)$ in the last layer of
  $C$, that channel $i+2$ is used in layer $d-1$ but not in layer $d$,
  and that channels $i$ and $i+1$ are both unused in layer $d-1$ of
  $C$ (see Figure~\ref{fig:onetwo} (a)).
  Then there is a sorting network $C'$ of depth $d$ in llnf such that
  channels $i+1$ and $i+2$ are both used in layers $d-1$ and $d$.
\end{lemma}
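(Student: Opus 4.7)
The strategy is to construct $C'$ by a local modification of the last two layers of $C$ and, if necessary, invoke Lemma~\ref{lem:llnf} to restore the llnf property.

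First, I would carry out a short structural analysis near channel $i+2$. By Lemma~\ref{lem:lastlayer} and the hypothesis that $i+2$ is unused in the last layer, $\{i+2\}$ is a block of $C$. Since $i$ and $i+1$ are free in layer $d-1$, the comparator of layer $d-1$ involving $i+2$ cannot connect down to the block $\{i,i+1\}$; by Corollary~\ref{cor:layerbeforelast}, together with the llnf condition that forbids $i+2$ and $i+3$ from both being free in the last layer, this comparator must be $(i+2,i+3)$ or $(i+2,i+4)$, and in either case $(i+3,i+4)$ appears in layer $d$. Writing $V = C[1..d{-}2](\vec x)$ and denoting by $W$ the value at channel $i+2$ after layer $d-1$ of $C$, the sortedness of $C(\vec x)$, the fact that $i,i+1$ are unchanged in layer $d-1$ and that $i+2$ is unchanged in layer $d$, together force the key inequality $\max(V_i,V_{i+1}) \le W$ for all $\vec x$.

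Second, I would define $C'$ by two changes: add the comparator $(i,i+1)$ to layer $d-1$ of $C$ (valid since $i$ and $i+1$ are free there by hypothesis), and in layer $d$ replace $(i,i+1)$ by $(i+1,i+2)$ (valid since $i+2$ is free in layer $d$). All other comparators remain as in $C$. The central step is to check that $C'$ computes the same function as $C$: at every channel outside $\{i,i+1,i+2\}$ no comparator changed and the inputs seen by the remaining layer-$d$ comparators (notably $(i+3,i+4)$) are identical to those in $C$, so those outputs coincide. At channels $i,i+1,i+2$, after layer $d-1$ of $C'$ the values are $\min(V_i,V_{i+1})$, $\max(V_i,V_{i+1})$, $W$, and the new comparator $(i+1,i+2)$ in layer $d$ acts trivially by the key inequality $\max(V_i,V_{i+1})\le W$, yielding outputs equal to those of $C$. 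Hence $C'$ sorts every input.

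Third, I would verify the llnf and usage conditions. Layer $d$ of $C'$ still contains only adjacent comparators, and the only way llnf can be violated is if $i-1$ was already free in layer $d$ of $C$, producing two adjacent free channels $i-1,i$ in $C'$. In that case I apply Lemma~\ref{lem:llnf} to $C'$, choosing the order of greedy redundant additions so that the pair $(i+1,i+2)$ is inserted into the last layer. This yields a depth-$d$ sorting network $C''$ in llnf. The original comparator $(i+2,j)$ in layer $d-1$ is non-redundant (else it would have been absent from the non-last layers of the llnf network $C$), so it survives normalization; similarly the inserted $(i,i+1)$ in layer $d-1$ is preserved in the generic case where $V_i$ and $V_{i+1}$ are not forced to be ordered by the first $d-2$ layers. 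Together with $(i+1,i+2)$ in layer $d$, this guarantees that channels $i+1$ and $i+2$ are both used in layers $d-1$ and $d$ of $C''$, which is the required $C'$.

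The main obstacle is the functional-equivalence check in the second step, which pivots on the algebraic simplification $\min(\max(V_i,V_{i+1}),W)=\max(V_i,V_{i+1})$ enabled by the key inequality; a secondary subtlety is to schedule the additions performed by Lemma~\ref{lem:llnf} so that $(i+1,i+2)$ is the pair inserted into the last layer rather than, say, $(i,i+1)$, which would otherwise leave channel $i+2$ free in layer $d$.
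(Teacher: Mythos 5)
Your construction in the second step is exactly the paper's: since $i$ and $i+1$ are free in layer $d-1$, the comparator $(i,i+1)$ can be moved from layer $d$ to layer $d-1$ without changing the function computed, and the comparator $(i+1,i+2)$ then added to layer $d$ is redundant --- your ``key inequality'' $\max(V_i,V_{i+1})\le W$ is precisely the sortedness of the outputs at channels $i+1$ and $i+2$. This part is correct, and in fact more explicit than the paper's one-line justification. The structural analysis in your first paragraph (locating the layer-$(d-1)$ comparator at $i+2$ and deducing that $(i+3,i+4)$ lies in layer $d$) is not needed for any of this: functional equivalence only requires that no layer-$d$ comparator of $C'$ other than the new $(i+1,i+2)$ touches a channel whose value after layer $d-1$ has changed, and this holds automatically because layers are matchings.

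The genuine problem is your third step. Being in llnf is, by definition, a property of the last layer only, and the only llnf violation $C'$ can exhibit is that $i-1$ and $i$ are now two adjacent channels both free in layer $d$; the correct fix --- and the one the paper uses --- is simply to add the comparator $(i-1,i)$ to layer $d$, which is redundant because both channels are unused there and the outputs are sorted. Instead you propose to re-run the whole normalization of Lemma~\ref{lem:llnf}, which begins by deleting \emph{all} redundant comparators, including, potentially, the $(i,i+1)$ you just placed in layer $d-1$ and the $(i+1,i+2)$ you placed in layer $d$. You concede that your argument for their survival only works ``in the generic case where $V_i$ and $V_{i+1}$ are not forced to be ordered by the first $d-2$ layers''; but $C$ being in llnf does not exclude that the original $(i,i+1)$ in layer $d$ was itself redundant (llnf is obtained precisely by \emph{adding} redundant comparators to the last layer), and in that non-generic case your normalization strips layer $d-1$ of its comparator on channel $i+1$ and the conclusion fails. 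Drop the re-normalization, add only $(i-1,i)$ when needed, and the proof is complete and coincides with the paper's.
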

\begin{proof}
  Since channels $i$ and $i+1$ are unused in layer $d-1$, comparator
  $(i,i+1)$ can be moved to that layer without changing the behavior
  of $C$; then the redundant comparator $(i+1,i+2)$ can be added to
  layer $d$, yielding the sorting network $C'$
  (Figure~\ref{fig:onetwo} (a)).
  If $i>1$ and channel $i-1$ is not used in the last layer of $C$,
  then $C'$ must also contain a comparator $(i-1,i)$ in its last layer
  (Figure~\ref{fig:onetwo} (b)).
\end{proof}

\begin{figure}
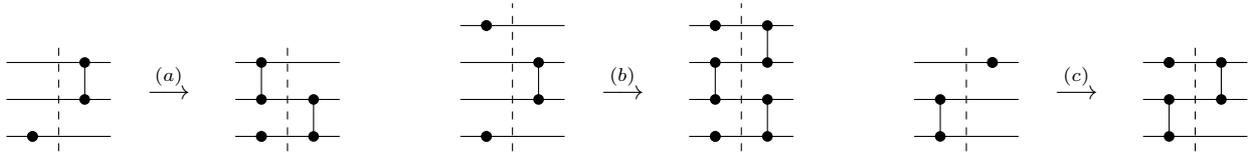

  \centering
  \begin{sortingnetwork}{3}{0.7}
    \addcomparator11
    \nextlayer
    \addcomparator23
  \end{sortingnetwork}
  \raisebox{2em}{\ \ $\stackrel{(a)}\longrightarrow$}
  \begin{sortingnetwork}{3}{0.7}
    \addcomparator11
    \addcomparator23
    \nextlayer
    \addcomparator12
  \end{sortingnetwork}
  \hfill
  \begin{sortingnetwork}{4}{0.7}
    \addcomparator11
    \addcomparator44
    \nextlayer
    \addcomparator23
  \end{sortingnetwork}
  \raisebox{2em}{\ \ $\stackrel{(b)}\longrightarrow$}
  \begin{sortingnetwork}{4}{0.7}
    \addcomparator11
    \addcomparator23
    \addcomparator44
    \nextlayer
    \addcomparator12
    \addcomparator34
  \end{sortingnetwork}
  \hfill
  \begin{sortingnetwork}{3}{0.7}
    \addcomparator12
    \nextlayer
    \addcomparator33
  \end{sortingnetwork}
  \raisebox{2em}{\ \ $\stackrel{(c)}\longrightarrow$}
  \begin{sortingnetwork}{3}{0.7}
    \addcomparator33
    \addcomparator12
    \nextlayer
    \addcomparator23
  \end{sortingnetwork}

  \caption{Transformations in the proof of Lemma~\ref{lem:onetwo}.}
  \label{fig:onetwo}
\end{figure}
A construction similar to that in Lemma~\ref{lem:onetwo} can also be
applied if channel $i-1$ (instead of $i+2$) is used at layer $d-1$ and
unused in layer $d$ (Figure~\ref{fig:onetwo} (c)).

\begin{definition}
  \label{defn:cosat}
  A sorting network of depth $d$ is \emph{co-saturated} if:
  (i)~its last layer is in llnf,
  (ii)~no two consecutive blocks at layer $d-1$ have unused channels,
  and (iii)~if $(i,i+1)$ is a comparator in layer $d$ and channels $i$
  and $i+1$ are unused in layer $d-1$, then channels $i-1$ and $i+2$
  (if they exist) are used in layer~$d$.
\end{definition}

\begin{theorem}
  \label{thm:cosat}
  If $C$ is a sorting network on $n$ channels with depth $d$, then
  there is a co-saturated sorting network $N$ on $n$ channels with
  depth $d$.
\end{theorem}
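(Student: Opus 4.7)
The plan is to transform $C$ in three successive stages, one for each clause of Definition~\ref{defn:cosat}. First, apply Lemma~\ref{lem:llnf} to $C$ to secure clause~(i), obtaining a sorting network $N_1$ of depth $d$ whose last layer is in llnf. Second, iteratively apply Lemma~\ref{lem:twolast}: while there exist two consecutive blocks containing channels $i$ and $j$ both unused in layer $d-1$, add the comparator $(\min(i,j),\max(i,j))$ to layer $d-1$. This leaves the last layer untouched (so clause~(i) survives) and terminates because each step strictly reduces the number of channels unused in layer $d-1$, which is bounded by $n$. Third, walk through each comparator $(i,i+1)$ in the last layer whose channels $i,i+1$ are unused in $d-1$; whenever $i-1$ or $i+2$ exists and is unused in layer $d$, invoke Lemma~\ref{lem:onetwo} (or its mirror version noted just after the lemma, or the combined variant depicted in Figure~\ref{fig:onetwo}(b)) to move $(i,i+1)$ from layer $d$ into layer $d-1$ and insert $(i-1,i)$, $(i+1,i+2)$, or both, into the last layer.

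The hypotheses of Lemma~\ref{lem:onetwo} are exactly what clause~(ii) supplies: if $(i,i+1)$ is in the last layer, $i,i+1$ are unused in $d-1$, and $i+2$ is unused in $d$, then $\{i,i+1\}$ and $\{i+2\}$ are adjacent blocks, and since the first already contains channels unused in $d-1$, clause~(ii) forces $i+2$ to be used in layer $d-1$. The argument for $i-1$ is symmetric, so Stage~3 can always be executed whenever clause~(iii) is violated.

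The main obstacle is verifying that the three stages do not undo each other's work. Stage~2 only modifies layer $d-1$ and so preserves clause~(i). Stage~3 does alter layer $d$ and hence the block partition, but in every case the newly formed blocks $\{i-1,i\}$ or $\{i+1,i+2\}$ consist entirely of channels used in layer $d-1$: channels $i,i+1$ have just been promoted to $d-1$, and $i-1,i+2$ were already used there by clause~(ii). Therefore clause~(ii) is preserved vacuously on the affected region; the last layer remains in llnf by the conclusion of Lemma~\ref{lem:onetwo}; and the new last-layer comparators cannot themselves violate clause~(iii), again because all four of their channels are used in $d-1$. Termination of Stage~3 follows, since each invocation permanently resolves one violation of clause~(iii) without creating any new ones.
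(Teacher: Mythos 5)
Your proof is correct and follows essentially the same route as the paper: first Lemma~\ref{lem:llnf} for clause~(i), then Lemma~\ref{lem:twolast} to saturate adjacent blocks for clause~(ii), then Lemma~\ref{lem:onetwo} to repair violations of clause~(iii). You supply more detail than the paper does on why the three stages do not interfere and why the repair loop terminates, but the underlying argument is the same.
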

\begin{proof}
  Assume $C$ is given.
  Apply Theorem~\ref{thm:llnf} to find a sorting network $S$ in llnf,
  containing no redundant comparators except possibly in the last
  layer.

  Let $B_1,\ldots,B_k$ be the $(d-1)$-blocks in $S$.
  For $i=1,\ldots,k-1$, if blocks $B_i$ and $B_{i+1}$ have a free
  channel, add a comparator between them.
  (It may be possible to add \emph{two} comparators between these
  blocks, namely if they both have two channels and none is used in
  layer~$d-1$.)
  Let $N$ be the resulting network.
  By Lemma~\ref{lem:twolast}, all the comparators added from $S$ to
  $N$ are redundant, so $N$ is a sorting network; by construction, $N$
  satisfies~(ii).

  If $N$ does not satisfy~(iii), then applying Lemma~\ref{lem:onetwo}
  transforms it into another sorting network $N'$ that does.
\end{proof}

In order to get a quantitative measure of the reduction of the search
space obtained by adding these constraints, we wrote a simple computer
program to generate all co-saturated two-layer suffixes on $n$
channels.
The values are given in Table~\ref{tab:co-sat}.

\begin{table}
  \centering
  \[\begin{array}{c|cccccccccccccc}
    \toprule
    n & 3 & 4 & 5 & 6 & 7 & 8 & 9 & 10 & 11 & 12 & 13 & 14 & 15 & 16 \\ \midrule
    \# & 4 & 4 & 12 & 26 & 44 & 86 & 180 & 376 & 700 & 1{,}440 &
    2{,}892 & 5{,}676 & 11{,}488 & 22{,}848 \\ \bottomrule
  \end{array}\]
  \[\begin{array}{c|cccc}
  \toprule
    n & 17 & 18 & 19 & 20 \\ \midrule
    \# & 45{,}664 & 90{,}976 & 182{,}112 & 363{,}896 \\ \bottomrule
  \end{array}\]

  \caption{Number of distinct co-saturated two-layer suffixes on $n$ channels, for $n\leq 20$.}
  \label{tab:co-sat}
\end{table}

\subsection{Co-Subsumption}
\label{sec:cosub}

We now explore a stronger restriction of the search space, based on
a different dualization of the ideas in~\cite{BCCSZ14}.
Given a comparator network $C$, define
$\inputs(C)=\{\vec x\in\{0,1\}^n\mid C(\vec x)\mbox{ is sorted}\}$.
We say that $C$ \emph{co-subsumes} $C'$, $C\cosub C'$, if there exists
a permutation $\pi$ such that $\pi(\inputs(C))\subseteq\inputs(C')$.
When we want to make the permutation $\pi$ explicit, we will write
$C\cosub_\pi C'$.
A \emph{generalized comparator} is a comparator $(i,j)$ where $i>j$,
and a \emph{generalized comparator network} is a comparator network
that may contain generalized comparators.
A \emph{generalized sorting network} is a generalized comparator
network that sorts all binary inputs.
When relevant, we will refer to a \emph{standard} comparator network
to emphasize that we are referring to a comparator network with no
generalized comparators.

\begin{lemma}
  \label{lem:cosub}
  If $C\cosub C'$ and there is a sorting network of the form $N;C$,
  then there is a generalized sorting network of the form $N';C'$ of
  the same size and depth.
\end{lemma}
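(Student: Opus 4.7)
The plan is to construct $N'$ by relabeling channels in $N$ according to the permutation $\pi$ witnessing $C \cosub_\pi C'$. Specifically, I would replace every comparator $(i,j)$ of $N$ by the comparator $(\pi(i),\pi(j))$. This pair becomes a generalized comparator exactly when $\pi(i)>\pi(j)$, which explains why the resulting network is in general only a generalized one. Because $\pi$ is a bijection, disjoint pairs in each layer of $N$ remain disjoint after relabeling, so $N'$ has the same layer structure, size, and depth as $N$.

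The core technical step is to verify a conjugation identity of the form $N'(\pi(\vec x))=\pi(N(\vec x))$ for every $\vec x \in \{0,1\}^n$, where $\pi(\vec z)$ denotes the sequence whose value at position $\pi(i)$ is $z_i$. I would prove this by induction on the number of layers, with the base case being the identity on inputs. For the inductive step one uses the crucial fact recorded in Equation~\eqref{eq:snRun}: both a standard and a generalized comparator $(a,b)$ write $\min$ of the two operands to channel $a$ and $\max$ to channel $b$, regardless of whether $a<b$. Thus, after applying comparator $(i,j)$ in $N$ to a configuration $\vec z$ and its relabeled counterpart $(\pi(i),\pi(j))$ in $N'$ to $\pi(\vec z)$, the resulting configurations are again related by $\pi$.

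With the conjugation identity in hand, the sorting property of $N';C'$ drops out immediately. For an arbitrary input $\vec y$, set $\vec x=\pi^{-1}(\vec y)$; then $N'(\vec y)=\pi(N(\vec x))$. Since $N;C$ is a sorting network, $N(\vec x)\in\inputs(C)$, and the hypothesis $\pi(\inputs(C))\subseteq\inputs(C')$ gives $N'(\vec y)\in\inputs(C')$, i.e.\ $C'$ sorts $N'(\vec y)$. Combined with the observation from the first paragraph that $N'$ preserves the size and depth of $N$, this yields a generalized sorting network $N';C'$ of the same size and depth as $N;C$.

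The main obstacle is purely notational: pinning down the convention for how $\pi$ acts on sequences and then tracking it carefully through the comparator semantics for both standard and generalized comparators. Once the conjugation identity $N'\circ\pi=\pi\circ N$ is correctly formulated and proved, the rest is a one-line application of the definition of $\cosub$.
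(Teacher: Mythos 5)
Your proposal is correct and follows essentially the same route as the paper: construct $N'$ by renaming the channels of $N$ according to the permutation $\pi$ witnessing $C\cosub_\pi C'$, observe that $\pi(N)(\pi(\vec x))=\pi(N(\vec x))\in\pi(\inputs(C))\subseteq\inputs(C')$, and conclude. The paper leaves the conjugation identity implicit, whereas you prove it by induction on layers; that is a reasonable elaboration, not a different argument.
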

\begin{proof}
  Construct $N'$ from $N$ by renaming the channels according to $\pi$.
  Given an input $\vec x$, $N(\vec x)\in\inputs(C)$, and therefore
  $\pi(N)(\pi(\vec x))\in\inputs(C')$.
\end{proof}

In particular, the following corollary states that we can fix the last
layer, as we could fix the first one.
\begin{corollary}
  \label{cor:cosub}
  There is an optimal-depth generalized sorting network whose last
  layer is $L^1_P$ (Equation~\eqref{eq:filterParberry}).
\end{corollary}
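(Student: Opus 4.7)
My plan is to invoke Lemma~\ref{lem:cosub} with $C'=L^1_P$, reducing the corollary to finding an optimal-depth sorting network of the form $N;C$ with $C \cosub L^1_P$; Lemma~\ref{lem:cosub} then produces a generalised sorting network of the same depth with last layer $L^1_P$.

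The approach parallels Parberry's proof that $\{L^1_P\}$ is a complete first-layer filter, dualised via the co-subsumption relation.  Starting from any optimal-depth sorting network, I would first apply Theorem~\ref{thm:cosat} to pass to a co-saturated network of the same depth; by Lemma~\ref{lem:lastlayer} and the llnf conditions its last layer $L_d$ then consists only of adjacent comparators with no two adjacent unused channels.  Next I would construct a channel permutation $\pi$ sending each pair of $L_d$ (in order, from top to bottom) to the corresponding pair $\{2k-1,2k\}$ of $L^1_P$, with any remaining channels of $L_d$ mapped to the remaining positions of $L^1_P$.  Using the combinatorial description of $\inputs(L)$ for an adjacent-matching layer, a direct check would yield $\pi(\inputs(L_d)) \subseteq \inputs(L^1_P)$, witnessing $L_d \cosub L^1_P$.

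The main obstacle is that the co-saturated last layer need not be a perfect adjacent matching of the same pair-count as $L^1_P$: it may still contain isolated unused channels (for instance the llnf layer $\{(2,3)\}$ on four channels, where channels $1$ and $4$ are both free but non-adjacent), in which case $\inputs(L_d)$ cannot be injected into $\inputs(L^1_P)$ by any channel permutation.  Handling this will require either strengthening the normalisation of the last layer -- exploiting the freedom of generalised comparators in earlier layers to restructure $L_d$ into a perfect adjacent matching at no cost in depth -- or applying Lemma~\ref{lem:cosub} to a suffix of length greater than one, so that the co-subsumption is established at the level of the last two layers jointly rather than of $L_d$ alone.
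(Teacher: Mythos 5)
Your plan coincides with the paper's proof: invoke Lemma~\ref{lem:cosub} with $C'=L^1_P$, taking $\pi$ to map the $k$-th comparator of the existing last layer to the pair $(2k-1,2k)$. The paper's proof consists of exactly this one step, asserting without further argument that the last layer $C$ of an optimal-depth sorting network satisfies $C\cosub L^1_P$ under this $\pi$. The obstacle you identify in your final paragraph is therefore not a shortfall of your write-up relative to the paper --- it is a genuine gap in the paper's own proof, and your counterexample is correct. On four channels, $\{(1,2),(3,4)\},\{(1,3),(2,4)\},\{(2,3)\}$ is a depth-optimal sorting network whose last layer $\{(2,3)\}$ is non-redundant, in llnf, and even co-saturated, yet $\{(2,3)\}\not\cosub L^1_P$: the set $\inputs(\{(2,3)\})$ contains the two weight-$2$ vectors $0011$ and $0101$, while $\inputs(\{(1,2),(3,4)\})$ contains only $0011$, and channel permutations preserve weight, so no injection exists. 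More generally, the number of weight-$w$ vectors in $\inputs(C)$ is $2^{s}$ where $s$ counts the comparators of $C$ straddling position $n-w$; comparing these counts with those for $L^1_P$ forces every comparator of a layer co-subsuming $L^1_P$ to be of the form $(2k-1,2k)$, i.e.\ $C\subseteq L^1_P$ already. So the single-layer co-subsumption route presupposes an optimal-depth network whose last layer is already a sub-layer of $L^1_P$, which is essentially the statement to be proved.

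Consequently neither your proposal nor the paper's proof is complete as it stands. Of your two suggested repairs, the first (restructuring $L_d$ into a perfect adjacent matching by transformations of the given network) cannot succeed in general: in the four-channel example the outputs of the two-layer prefix already contain both $0011$ and $0101$, so no \emph{permutation} of that prefix can be followed by $L^1_P$ --- one must pass to a genuinely different network, not a relabelling of the given one. Your second suggestion, establishing co-subsumption at the level of a longer suffix, is the more promising direction, but it is developed neither here nor in the paper. You were right to flag this step rather than wave it through; the corollary needs an argument that neither text currently supplies.
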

\begin{proof}
  Let $C$ be the last layer of an optimal-depth (generalized) sorting
  network.
  Then $C\cosub L^1_P$: if $C=\{(i_1,j_1),\ldots,(i_\ell,j_\ell)\}$, take
  $\pi$ to be the permutation mapping $i_k$ to $2k-1$ and $j_k$ to
  $2k$.
\end{proof}

Note that this result cannot be combined with e.g.~Parberry's result
that there always exists a sorting network whose first layer is $L^1_P$:
by fixing the first layer we dictate what the last layer must be, and
conversely.

In general, the networks constructed in these proofs can have reversed
comparators, and the usual transformation described in Exercise
5.3.4.16 of~\cite{Knuth73} for removing these cannot be applied, as it
changes the last layers.
However, we can also define a \emph{dual} standardization procedure
that can be applied in these cases.

\begin{definition}
  Let $C$ be a generalized sorting network, and define a standard
  comparator network as follows: let $k$ be the last layer of $C$
  where a reversed comparator $(i,j)$ occurs, and interchange $i$ and
  $j$ everywhere in layers $1$ to $k$.
  Iterate this procedure until there are no reverse comparators left;
  we denote the resulting network by $C^{st}$.
\end{definition}

\begin{lemma}
  If $C$ is a generalized sorting network, then $C^{st}$ is a
  sorting network.
\end{lemma}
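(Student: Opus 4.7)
The plan is to proceed by induction on $k^\star$, the layer index of the last reversed comparator in $C$. The base case $k^\star = 0$ is immediate: then $C$ contains no reversed comparators, so $C^{st} = C$ is already a standard sorting network.

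For the inductive step, let $(i,j)$ with $i>j$ be the reversed comparator at layer $k = k^\star$, denote by $\sigma$ the transposition of $i$ and $j$, and by $\pi_\sigma$ its induced action on $\{0,1\}^n$. Iterating the elementary identity that relabeling channels $i,j$ in a single layer $L$ coincides, as a function on sequences, with conjugating $L$ by $\sigma$, i.e.\ $\sigma(L) = \sigma \,;\, L \,;\, \sigma$, one rewrites the one-step standardization $C'$ of $C$ as
\[
 C' \;=\; \sigma \,;\, L_1 \,;\, \cdots \,;\, L_k \,;\, \sigma \,;\, L_{k+1} \,;\, \cdots \,;\, L_d .
\]
The internal $\sigma$ occurring right after layer $k$ combines with the reversed comparator $(i,j)$ of $L_k$ to produce the same values on channels $i$ and $j$ as the standard comparator $(j,i)$ does with no subsequent swap, while the other comparators of $L_k$ are unaffected by $\sigma$. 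Hence $C' = \sigma \,;\, C_0$, where $C_0$ is obtained from $C$ by replacing the reversed comparator at layer $k$ with the corresponding standard one. Because $\pi_\sigma$ is a bijection on $\{0,1\}^n$, $C'$ sorts every input iff $C_0$ does. Moreover the last reversed comparator of $C_0$ now lies at a layer strictly less than $k$, so the inductive hypothesis applied to $C_0$ yields that $C^{st} = (C_0)^{st}$ is a standard sorting network, provided we verify that $C_0$ itself is a generalized sorting network.

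This verification is the main obstacle. Write $C = M \,;\, N$ with $M = L_1 \cdots L_k$ and $N = L_{k+1} \cdots L_d$ (standard by the choice of $k$); then $C_0(\vec x) = N(\pi_\sigma(M(\vec x)))$. For an arbitrary binary input $\vec x$ set $\vec y = M(\vec x)$. The reversed comparator at the last layer of $M$ forces $\vec y_i \leq \vec y_j$, so $\pi_\sigma(\vec y)$ equals $\vec y$ except in the critical case $\vec y_i = 0$, $\vec y_j = 1$. In this case the two sequences are incomparable in the pointwise order, so the monotonicity of $N$ (Lemma~\ref{lem:monot}) cannot be applied directly to transfer the sortedness of $N(\vec y)$ to $N(\pi_\sigma(\vec y))$. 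The plan to close this gap is to sandwich $\pi_\sigma(\vec y)$ between the pointwise minimum and maximum of $\vec y$ and $\pi_\sigma(\vec y)$ (which carry $0$s, respectively $1$s, in both positions $i$ and $j$), to exhibit both sandwich sequences as images $M(\vec x^{-})$ and $M(\vec x^{+})$ on suitable single-bit modifications $\vec x^{-} \leq \vec x \leq \vec x^{+}$ of $\vec x$ via monotonicity of $M$, and finally to apply monotonicity of $N$ once more to pin $N(\pi_\sigma(\vec y))$ to the unique sorted sequence with the same number of $1$s as $\vec y$.
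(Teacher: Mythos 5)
Your reduction of one standardization step to $C' = \sigma\,;\,C_0$ and the remark that $C'$ sorts everything iff $C_0$ does are both correct, and you have put your finger on exactly the right obstacle; this mirrors the paper, whose entire proof is the one-line reduction of each step to Lemma~\ref{lem:cosub}. But the plan you propose for closing the gap does not work. First, the concluding sandwich step is invalid even when the sandwich vectors are realized: take $N=\{(1,3)\}$ on three channels, $j=1$, $i=2$, $\vec y=(1,0,0)$, so that $\pi_\sigma(\vec y)=(0,1,0)$, $\vec y^{-}=(0,0,0)$ and $\vec y^{+}=(1,1,0)$. Then $N(\vec y^{-})=(0,0,0)$, $N(\vec y)=(0,0,1)$ and $N(\vec y^{+})=(0,1,1)$ are all sorted, yet $N(\pi_\sigma(\vec y))=(0,1,0)$ is not; the sandwich only confines $N(\pi_\sigma(\vec y))$ to the two weight-one vectors $(0,0,1)$ and $(0,1,0)$, and neither monotonicity nor weight-counting excludes the unsorted one. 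Second, the realization step is also unavailable in general: flipping one input bit of $M$ flips exactly one output bit, but \emph{which} output bit flips is dictated by $M$ and $\vec x$, so you cannot choose $\vec x^{\pm}$ so that the flip lands precisely on channel $i$ (respectively $j$).

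Worse, the invariant your induction rests on --- that $C_0$ (equivalently, the one-step standardization) is again a generalized sorting network --- is false, so no refinement of the local argument can succeed. On $n=3$ channels let $C$ consist of the five singleton layers $(2,1)$, $(3,2)$, $(2,1)$, $(2,1)$, $(1,3)$: the first three layers sort every input into decreasing order, the fourth is then redundant, and the fifth maps the resulting unsorted outputs $(1,0,0)$ and $(1,1,0)$ to $(0,0,1)$ and $(0,1,1)$, so $C$ is a generalized sorting network. Its last reversed comparator is $(2,1)$ in layer four; exchanging channels $1$ and $2$ in layers one through four yields $(1,2)$, $(3,1)$, $(1,2)$, $(1,2)$, $(1,3)$, which leaves the input $(0,1,0)$ completely untouched and hence unsorted. (Iterating the procedure happens to repair this: the fully standardized network $(2,3)$, $(1,3)$, $(1,2)$, $(1,2)$, $(1,3)$ does sort --- but that cannot be established one step at a time.) Note that the paper's own justification via Lemma~\ref{lem:cosub} asserts the same step-wise preservation and is challenged by the same example, so what is needed here is a genuinely global argument about the iterated procedure, not a sharper version of your sandwich.
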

\begin{proof}
  At each step, we are applying the construction in
  Lemma~\ref{lem:cosub} to the suffix of $C$ starting with comparator
  $(i,j)$.
\end{proof}

\begin{corollary}
  The sorting networks constructed in Lemma~\ref{lem:cosub} and
  Corollary~\ref{cor:cosub} can be taken to be standard.
\end{corollary}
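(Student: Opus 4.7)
The plan is to apply the standardization procedure $C \mapsto C^{st}$ (defined immediately before the corollary) to the generalized sorting networks $N';C'$ produced by Lemma~\ref{lem:cosub} and Corollary~\ref{cor:cosub}, and to observe that this procedure leaves the distinguished suffix $C'$ intact while producing a genuine sorting network.

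First, I would note a crucial property of both constructions: in each case the suffix $C'$ is itself a standard comparator network. In Corollary~\ref{cor:cosub} this is immediate, since $C'=L^1_P$ consists only of comparators of the form $(2i-1,2i)$ with $2i-1<2i$. In Lemma~\ref{lem:cosub}, $C'$ is being produced from a standard $C$ via a permutation $\pi$, but the network actually appended in the construction is $C'$ itself (the target of the co-subsumption), not a permuted image of $C$; so any reversed comparator in the generalized network $N';C'$ must occur inside the prefix $N'$, which was obtained by renaming channels of $N$ according to $\pi$.

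Next, I would recall how $(\cdot)^{st}$ works: one locates the last layer $k$ containing a reversed comparator $(i,j)$ and interchanges the labels $i$ and $j$ throughout layers $1$ through $k$ only. Since every reversed comparator of $N';C'$ is inside $N'$, every such $k$ lies strictly before the first layer of $C'$. Consequently each standardization step modifies only layers within the prefix, and $C'$ is preserved verbatim at every iteration. Iterating until no reversed comparators remain yields a standard comparator network whose suffix is still $C'$, and by the preceding lemma this network is a sorting network. This gives the desired standard sorting network of the form $N'';C'$, of the same depth.

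The only point requiring a moment of care is justifying that $C'$ is indeed untouched, i.e.\ that the induction over reversed comparators does not at some point push a reversed comparator into $C'$. But the procedure is defined to swap \emph{labels}, not to move comparators across layers, and it operates only on layers at most $k$; so a layer that is already standard and lies strictly after $k$ remains exactly as it was. This is the main (and essentially the only) step that needs to be checked; once it is in hand, the corollary follows directly from the already-proved fact that $C^{st}$ is a sorting network whenever $C$ is a generalized sorting network.
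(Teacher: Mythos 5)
Your proof is correct and follows exactly the route the paper intends: the paper states this corollary without an explicit proof, relying on the immediately preceding definition of $C^{st}$ and the lemma that $C^{st}$ sorts, and your write-up supplies precisely the missing observation --- that all reversed comparators lie in the prefix $N'$, so each standardization step only relabels layers up to some $k$ inside the prefix and leaves the standard suffix $C'$ (in particular $L^1_P$) verbatim. No discrepancy with the paper's approach.
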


In particular, there is always a sorting network on $n$ channels whose
last layer is $L^1_P$.

\subsection{Practical Impact}
\label{sec:cosat}

We now show how to improve the SAT encoding presented in
Section~\ref{backgr_sat} using the structure of the last layers of
a sorting network.  We first consider the impact of the necessary
conditions stated in Section~\ref{sec:props}.

Consider Lemma~\ref{lem:lastlayer}, which states that non-redundant
comparators in the last layer have to be of the form $(i,i+1)$.  The
following propositional constraint forbids comparators in the last
layer that connect non-adjacent channels. The constraint is expressed
in terms of a quadratic number of unit clauses:
\[\varphi_1 = \sset{\neg g^d_{i,j}}{1\leq i,i+1<j\leq n}\]

Corollary~\ref{cor:lastlayer} generalizes Lemma~\ref{lem:lastlayer},
stating that whenever a comparator at any layer connects two
non-adjacent channels, necessarily one of these channels is used at a
later layer. This is captured by the following propositional
constraint, which adds a single clause for each of the $(n-1)(n-2)/2$
non-adjacent comparators at any given depth $\ell$:
\[\varphi_1(\ell) = 
   \sset{g^\ell_{i,j}\to\bigvee_{\ell <k\leq d}\used^k_i\vee\used^k_j}
     {1\leq i,i+1<j\leq n}\]
Note that indeed $\varphi_1(d) = \varphi_1$, as there is no depth $k$
with $\ell < k \leq d$.

Consider now the penultimate layer, $d-1$.  According to
Corollary~\ref{cor:layerbeforelast}, no comparator at this layer can
connect two channels more than $3$ channels apart.  Similar to
Lemma~\ref{lem:lastlayer}, we encode this restriction by adding unit
clauses for each of the $(n-3)(n-4)/2$ comparators more than $3$
channels apart:
\[ \varphi_2 = \sset{\neg g^{d-1}_{i,j}}{1\leq i,i+3<j\leq n} \]

Corollary~\ref{cor:layerbeforelast} also states that the existence of
a comparator $(i,i+3)$ on the penultimate layer implies the existence
of the two comparators $(i,i+1)$ and $(i+2,i+3)$ on the last layer.
This is encoded introducing $2(n-3)$ clauses in the following
constraint :
\[ \varphi_3 = 
    \sset{g^{d-1}_{i,i+3} \to g^d_{i,i+1}\right) 
          \wedge \left(g^{d-1}_{i,i+3} \to g^d_{i+2,i+3}}{1\leq i\leq
            n-3}
\]

Corollary~\ref{cor:layerbeforelast} also states that the
existence of a comparator $(i,i+2)$ on the penultimate layer implies
the existence of either of the comparators $(i,i+1)$ or $(i+1,i+2)$ on
the last layer.
This can be encoded using $n-2$ clauses:
\[ \varphi_4 = \sset{g^{d-1}_{i,i+2}\to g^d_{i,i+1}\vee g^d_{i+1,i+2}}{1\leq i\leq n-2} \]

We now consider the sufficient conditions from
Section~\ref{sec:co-sat}.  According to Lemma~\ref{lem:llnf}~(ii), we
can break symmetries by requiring that there be no adjacent unused
channels in the last layer, i.e., that the network be in llnf.
\[ \psi_1 = \sset{\used^d_i\vee\used^d_{i+1}}{1\leq i<n} \]
Essentially, this forces the SAT solver to add a (redundant)
comparator between any two adjacent unused channels on the last
layer.

The next symmetry break is based on a consideration of two adjacent
blocks.
There are four possible cases: two adjacent comparators, a comparator
followed by an unused channel, an unused channel followed by a
comparator, and two unused channels.
The latter is forbidden by the symmetry break $\psi_1$ (and thus not
regarded further).

The case of two adjacent comparators is handled by formula $\psi_2^a$:
\[ \psi_2^a = 
     \sset{g^d_{i,i+1}\wedge g^d_{i+2,i+3}\to\left(\used^{d-1}_i 
           \wedge\used^{d-1}_{i+1}\right)\vee\left(\used^{d-1}_{i+2}
           \wedge\used^{d-1}_{i+3}\right)}{1\leq i\leq n-3} \]
This condition essentially forces the SAT solver to add a (redundant)
comparator on layer $d-1$, if both blocks have an unused channel in
that layer.

The same idea of having to add a comparator at layer $d-1$ is enforced
for the two remaining cases of a comparator followed by an unused
channel or its dual by $\psi_2^b$ and $\psi_2^c$, respectively:
\begin{align*}
  \psi_2^b &= \sset{g^d_{i,i+1}\wedge\neg\used^d_{i+2}\to\left(\used^{d-1}_i\wedge\used^{d-1}_{i+1}\right)\vee\used^{d-1}_{i+2}}{1\leq i\leq n-2}\\
  \psi_2^c &= \sset{\neg\used^d_i\wedge g^d_{i+1,i+2}\to\used^{d-1}_i\vee\left(\used^{d-1}_{i+1}\wedge\used^{d-1}_{i+2}\right)}{1\leq i\leq n-2}
\end{align*}

The final symmetry break is based on Lemma~\ref{lem:onetwo}, i.e., on
the idea of moving a comparator from the last layer to the second last
layer.
We encode that such a situation cannot occur, i.e., that whenever we
have a comparator on the last layer $d$ following or followed by an
unused channel, one of the channels of the comparator is used on layer
$d-1$:
\begin{align*}
  \psi_3^a &= \sset{g^d_{i,i+1}\wedge\neg\used^d_{i+2} \to\used^{d-1}_{i} \vee\used^{d-1}_{i+1}}{1\leq i\leq n-2}\\
  \psi_3^b &= \sset{g^d_{i,i+1}\wedge\neg\used^d_{i-1} \to\used^{d-1}_{i} \vee\used^{d-1}_{i+1}}{2\leq i\leq n-1}
\end{align*}

Defining $\psi_2=\psi_2^a\wedge\psi_2^b\wedge\psi_2^c$ and
$\psi_3=\psi_3^a\wedge\psi_3^b$, and denoting 
\[\mathit{last}=\bigwedge_{i=1}^4\varphi_i\wedge\bigwedge_{i=1}^3\psi_i\]
we can enrich the SAT encodings
expressed by Equations~(\ref{eq:sat}) and~(\ref{eq:satP})
to obtain 
\begin{equation}\label{eq:sat:last}
  \varphi^{\mathit{last}}(n,d) =\valid(C^d_n) \wedge \bigwedge_{\bar x \in \{0,1\}^n}\sorts(C^d_n, \bar x) \wedge\mathit{last}
\end{equation}
\begin{equation}\label{eq:satP:last}
  \varphi^{\mathit{last}}_P(n,d) =\valid(C^{d-|P|}_n) \wedge \bigwedge_{\bar x \in\outputs(P)}\sorts(C^{d-|P|}_n, \bar x)\wedge\mathit{last}
\end{equation}
The constraints $\varphi_1(\ell)$ do not seem to have a positive impact on the
performance of the SAT solvers, so we will not use them hereafter.

We note that the results from Section~\ref{sec:cosub} cannot be
applied in this scenario: the proof strategy relies on the fact that
there must exist a sorting network whose two first layers have a
particular form, and this is incompatible with the construction in
Corollary~\ref{cor:cosub}.

\section{Back to the Front: Permuting Filters}
\label{sec:permuting}

In this section, we come back to Lemma~\ref{lem:sorted} and to the
improved SAT encoding described in Section~\ref{sec:improved}.  
When considering the proposition of Equation~\eqref{eq:satP} for a
fixed prefix $P$, the SAT encoding that searches for an extension of
$P$ to a sorting network is optimized with respect to the leading
zeroes and trailing ones in  $\outputs(P)$.

It is a well-known fact that, if $P$ can be extended to a sorting
network of depth $d$, then so can any prefix $P'$ obtained by
permuting the channels in $P$ followed by a procedure called
untangling (which basically turns back comparators that were turned
``upside down'' in the permutation step -- see~\cite{Parberry91} or
Exercise 5.3.4.16 of~\cite{Knuth73}).
Although $P'$ is equally good as $P$ in terms of the number of its
(unsorted) outputs and in terms of its potential to extend to a
sorting network of depth $d$, different permutations of $P$ can lead
to smaller SAT encodings in light of the optimizations based on
leading zeroes and/or trailing ones.

\begin{wrapfigure}[5]{r}{0.25\textwidth}
  \vspace*{-1em}
  \begin{tabular}{cc}
    \begin{sortingnetwork}{6}{0.35}
      \nodeconnection{ {1,2}, {3,4}, {5,6}}
    \end{sortingnetwork}
    &
    \begin{sortingnetwork}{6}{0.35}
      \nodeconnection{ {1,6}}
      \nodeconnection{ {2,5}}
      \nodeconnection{ {3,4}}
    \end{sortingnetwork} \\
    $L^1_P$ & $L^1_{BZ}$
  \end{tabular}
\end{wrapfigure}
This behavior can already be observed in the first layer: Parberry
\cite{Parberry91} used the first layer $L^1_P$ consisting of
comparators of the form $(2i-1, 2i)$, for all $1 \leq i \leq
\left\lfloor \frac{n}{2} \right\rfloor$, whereas Bundala and
Z\'avodn\'y~\cite{BZ14} employed a first layer $L^1_{BZ}$, which is a permutation
thereof consisting of comparators $(i, n+1-i)$, for all $1 \leq i \leq
\left\lfloor \frac{n}{2} \right\rfloor$. See the figure on the right
for an example on $6$ channels.
We define the \emph{window-size} of a vector $\vec x$ to be its length
minus the number of its leading zeros and trailing ones. Formally, for
$\vec x = (x_1, x_2, \ldots, x_n )$,
\[ \mathit{ws}(\vec x) = 
        n- \max\sset{a}{x_1 = x_2 = \cdots = x_a = 0} - 
           \max\sset{b}{x_{n-b+1} = x_{n-b+2} = \cdots = x_n = 1}
\]
By the previous considerations, the number of channels for which the output on the channel is not immediately determined by the input to this channel,
 and which have thus to be explicitly considered in the SAT formula specifying the behavior of the network for an input vector from a set $S$,
is equal to the sum of the window-sizes of
the vectors in $S$.
Table~\ref{tab:numRows} shows the sum of window-sizes of
$\outputs(L^1_P)$ and $\outputs(L^1_{BZ})$ for $3 \leq n \leq 17$
channels: a first layer $L^1_{BZ}$ always leads to a smaller SAT
encoding than the corresponding first layer $L^1_P$.
\begin{table}[htb]
\small
\[\begin{array}{l|ccccccccccccccc}
\toprule
n  & 3 & 4 & 5 & 6 & 7 & 8 & 9 & 10 & 11 & 12 & 13 & 14 & 15 & 16 & 17 \\
\midrule
L^1_P & 5 & 12 & 44 & 84 & 233 & 408 & 1{,}016 & 1{,}704 & 4{,}013 & 6{,}564 & 14{,}948 & 24{,}060 & 53{,}585 & 85{,}296 & 186{,}992 \\
L^1_{BZ} & 4 & 10 & 36 & 72 & 196 & 358 & 876 & 1{,}524 & 3{,}532 & 5{,}962 & 13{,}380 & 22{,}128 & 48{,}628 & 79{,}246 & 171{,}612 \\
\bottomrule
\end{array}\]
\caption{Number of channels to consider in the encoding after the first layer.}
\label{tab:numRows}
\end{table}

We have observed that some permutations of a given prefix are better
for SAT encodings than others, and that new results on optimal-depth
sorting networks derive from fixing $2$-layer prefixes. So, we
introduce a new criterion on  prefixes: select a permutation of the
prefix that minimizes the sum of the window-sizes of its outputs.
This idea is further illustrated in the following example.

\begin{example}
  Consider the three representations $C_1, C_2,$ and $C_3$ of a
  particular $2$-layer filter in $R_6$, presented in
  Figure~\ref{fig:permuting6channels}.
\begin{figure}[h]
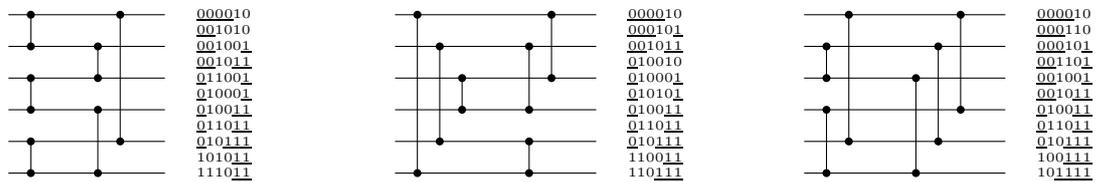

  \begin{minipage}{0.32 \textwidth}
    \centering
    \begin{sortingnetwork}{6}{0.6}
      \nodeconnection{{1,2},{3,4},{5,6}}
      \addtocounter{sncolumncounter}{2}
      \nodeconnection{{1,3},{4,5}}
      \nodeconnection{{2,6}}
     \end{sortingnetwork}
{\tiny$\begin{array}[b]{l}
\bu0 \bu0 \bu0 \bu0 1 0 \\
\bu0 \bu0 1 0 1 0 \\
\bu0 \bu0 1 0 0 \bu1 \\
\bu0 \bu0 1 0 \bu1 \bu1 \\
\bu0 1 1 0 0 \bu1 \\
\bu0 1 0 0 0 \bu1 \\
\bu0 1 0 0 \bu1 \bu1 \\
\bu0 1 1 0 \bu1 \bu1 \\
\bu0 1 0 \bu1 \bu1 \bu1 \\
1 0 1 0 \bu1 \bu1 \\
1 1 1 0 \bu1 \bu1 \\
\end{array}$}
    \end{minipage}
    \begin{minipage}{0.32 \textwidth}        
    \centering
     \begin{sortingnetwork}{6}{0.6}
        \nodeconnection{{1,6}}
        \nodeconnection{{2,5}}
        \nodeconnection{{3,4}}
        \addtocounter{sncolumncounter}{2}
        \nodeconnection{{1,2},{3,5}}
        \nodeconnection{{4,6}}
     \end{sortingnetwork}
{\tiny$\begin{array}[b]{l}
\bu0 \bu0 \bu0 \bu0 1 0 \\ 
\bu0 \bu0 \bu0 1 0 \bu1\\ 
\bu0 \bu0 1 0 \bu1 \bu1\\ 
\bu0 1 0 0 1 0\\ 
\bu0 1 0 0 0 \bu1\\ 
\bu0 1 0 1 0 \bu1\\ 
\bu0 1 0 0 \bu1 \bu1\\ 
\bu0 1 1 0 \bu1 \bu1\\ 
\bu0 1 0 \bu1 \bu1 \bu1\\ 
1 1 0 0 \bu1 \bu1\\ 
1 1 0 \bu1 \bu1 \bu1\\ 
\end{array}$}
    \end{minipage}
        \begin{minipage}{0.32 \textwidth}        
    \centering
     \begin{sortingnetwork}{6}{0.6}
        \nodeconnection{{1,3},{4,5}}
        \nodeconnection{{2,6}}
        \addtocounter{sncolumncounter}{2}
        \nodeconnection{{1,4}}
        \nodeconnection{{2,5}}
        \nodeconnection{{3,6}}
     \end{sortingnetwork}
{\tiny$\begin{array}[b]{l}
\bu0 \bu0 \bu0 \bu0 1 0\\ 
\bu0 \bu0 \bu0 1 1 0\\ 
\bu0 \bu0 \bu0 1 0 \bu1\\
\bu0 \bu0 1 1 0 \bu1\\ 
\bu0 \bu0 1 0 0 \bu1\\ 
\bu0 \bu0 1 0 \bu1 \bu1\\ 
\bu0 1 0 0 \bu1 \bu1\\ 
\bu0 1 1 0 \bu1 \bu1\\ 
\bu0 1 0 \bu1 \bu1 \bu1\\ 
1 0 0 \bu1 \bu1 \bu1\\ 
1 0 \bu1 \bu1 \bu1 \bu1\\ 
\end{array}$}
    \end{minipage}
    \caption{Three permutations of the same $2$-layer-prefix on $6$
      channels. $C_1$ (left) has $L^1_P$ as first layer, $C_2$
      (middle) has $L^1_{BZ}$ as first layer, and $C_3$ (right) is
      optimized to minimize total window size in the outputs from the
      filter. Each filter is accompanied by the set of its non-sorted
      outputs.}
    \label{fig:permuting6channels}
\end{figure}
Each of these prefixes can be transformed into the other two by
permuting channels and untangling, hence they all represent the same
filter.  
The outputs from the prefix $C_1$, on the left, give a total
window-size of $35$. The outputs from the prefix $C_2$, in the middle,
give a total window-size of $34$. The outputs from the prefix
$C_3$, on the right, give a total window-size of $28$.
\end{example}

To determine the existence of an $n$-channel sorting network of depth
$d$ we need to consider SAT instances for $|R_n|$ 2-layer prefixes.
For each prefix $P$, we seek a permutation, $P'$, that minimizes the
total window-sizes of the vectors in $\outputs(P')$.
To this end, we apply a simple evolutionary algorithm, randomly
swapping chosen channels followed by untangling to generate the
offspring. We use the sum of the window-sizes of the outputs as the
fitness function, and keep the $32$ best prefixes in each of  $20$
iterations. Finally we keep a single prefix with best fitness which we
denote by $F_\opt$. 

Table~\ref{tab:prefixStyles} shows SAT solving times when proving that
no sorting network on $16$ channels with $8$ layers exists using the
different representations of the 211 prefixes in $R_{16}$. Here,
$\FF_P$ denotes the representation of the $2$-layer prefixes such that
the first layer is $F_P$, and analogously for $\FF_{BZ}$ and
$\FF_{\opt}$. The evolutionary algorithm required 5 minutes to
compute the set $\FF_{\opt}$.  Optimizing the prefixes reduced the
overall SAT solving time by factors of circa $4$ and $2$, and the
maximum running times by factors of circa $9$ and $4$, respectively.

\begin{table}[htb]
\centering
   \begin{tabular}{l | r | r}
   \toprule
   Prefix & Overall time (s) & Maximum time (s) \\
   \midrule
   $\FF_P$ & $22{,}241$ & $326$ \\
   $\FF_{BZ}$ & $10{,}927$ & $150$ \\
   $\FF_\opt$ & $5{,}492$ & $36$ \\
   \bottomrule
   \end{tabular}
   \caption{Impact of permuting the prefix when proving that no sorting network for $16$ channels with at most $8$ layers exists.}
   \label{tab:prefixStyles}
\end{table}

The idea presented in this chapter -- to search for sorting networks
that extend a given prefix by first selecting a permutation of the
prefix to minimize total window size -- plays a central role in the
forthcoming chapters.

\section{Improved Upper Bounds for $17$, $19$ and 
                      $20$-Channel Sorting Networks}
\label{sec:upbound}

In this section, we show how the improvements presented in the previous
sections apply to facilitate the search for new upper bounds for
sorting networks with $17$, $19$ and $20$ channels. To do this, we exhibit
sorting networks for $17$ channels and for $20$ channels that have
one less layer than the previous best known sorting networks for these
cases -- $10$ and $11$ layers, respectively. The
$11$-layer sorting network for $20$ channels implies the improved
upper bound also for $19$ channels, which was previously equal to $12$.

When searching for upper bounds on the depth of sorting networks, it
is common practice to search for an extension of a particular chosen
prefix of the network. This is in contrast to the quest to prove
lower bounds, where we need to show that there is no prefix that
extends to a network of the specified depth. For upper bounds, any
choice of prefix is fair play. 

For $n=17$ channels, we chose a prefix consisting of the first three
layers of a Green Filter on $16$ channels, leaving one channel unused,
and we were able to extend it to the sorting network with $10$ layers
depicted as Figure~\ref{snw:17_10}.
The selected prefix, separated by the dashed line, produces $800$
outputs, including the $18$ sorted ones.
This network improves on the best previously known sorting network for
$17$ channels.
For $n=20$ channels, we chose a $4$-layer prefix constructed from a
$4$-layer Green filter on $16$ channels (the top $8$ and bottom $8$),
and where the remaining middle $4$ channels are connected with a
$3$-layer optimal-depth sorting network.
We were able to extend this prefix by an
additional $7$ layers to obtain the $11$-layer sorting network
depicted as Figure~\ref{fig:20_11}, where the selected prefix is again
marked by the dashed line. This improves on the best previously known
sorting network for $20$ channels, which consisted of $12$ layers. The
selected prefix produces $840$ outputs, including $21$ sorted ones.
As reported in a preliminary
paper~\cite{DBLP:conf/cie/EhlersM15}, both of these results can be found
using a straightforward SAT encoding.

\begin{figure}
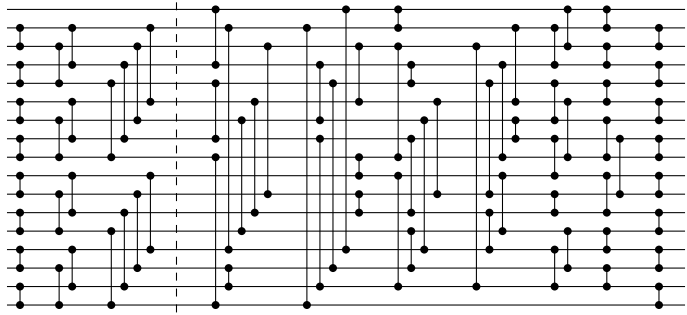

\centering
\begin{sortingnetwork}{17}{0.35}
      \nodeconnection{ {1,2}, {3,4}, {5,6}, {7,8}, {9,10}, {11,12}, {13,14}, {15,16}}
        \addtocounter{sncolumncounter}{2}
        \nodeconnection{ {1,3}, {5,7}, {9,11}, {13,15}}
        \nodeconnection{ {2,4}, {6,8}, {10,12}, {14,16}}
        \addtocounter{sncolumncounter}{2}      
        \nodeconnection{ {1,5}, {9,13}}
        \nodeconnection{ {2,6}, {10,14}}
        \nodeconnection{ {3,7}, {11,15}}
        \nodeconnection{ {4,8}, {12,16}}
      \nextlayer
        \addtocounter{sncolumncounter}{2}
        \nodeconnection{ {1,9}, {10,13}, {14,17}}
        \nodeconnection{ {2,3}, {4,16}}
        \nodeconnection{ {5,11}}
        \nodeconnection{ {6,12}}
        \nodeconnection{ {7,15}}
        \addtocounter{sncolumncounter}{2}
        \nodeconnection{ {1,16}}
        \nodeconnection{ {2,10}, {11,14}}
        \nodeconnection{ {3,13}}
        \nodeconnection{ {4,17}}
        \nodeconnection{ {6,7}, {8,9}, {12,15}}
        \addtocounter{sncolumncounter}{2}
        \nodeconnection{ {2,8}, {9,15}, {16,17}}
        \nodeconnection{ {3,5}, {6,10}, {13,14}}
        \nodeconnection{ {4,11}}
        \nodeconnection{ {7,12}}
        \addtocounter{sncolumncounter}{2}
        \nodeconnection{ {2,15}}
        \nodeconnection{ {4,6}, {7,13}}
        \nodeconnection{ {5,8}, {9,14}}
        \nodeconnection{ {10,11}, {12,16}}
        \addtocounter{sncolumncounter}{2}
        \nodeconnection{ {2,4}, {6,7}, {8,10}, {11,13}, {14,16}}
        \nodeconnection{ {3,5}, {9,12}, {15,17}}
        \addtocounter{sncolumncounter}{2}
        \nodeconnection{ {2,3}, {4,5}, {6,8}, {9,11}, {12,13}, {14,15}, {16,17}}
        \nodeconnection{ {7,10}}
        \addtocounter{sncolumncounter}{2}
        \nodeconnection{ {1,2}, {3,4}, {5,6}, {7,8}, {9,10}, {11,12}, {13,14}, {15,16}}
    \end{sortingnetwork}
    \caption{A $10$-layer sorting network on $17$ channels with a $3$-layer Green prefix.}
\label{snw:17_10}\end{figure}
\begin{figure}
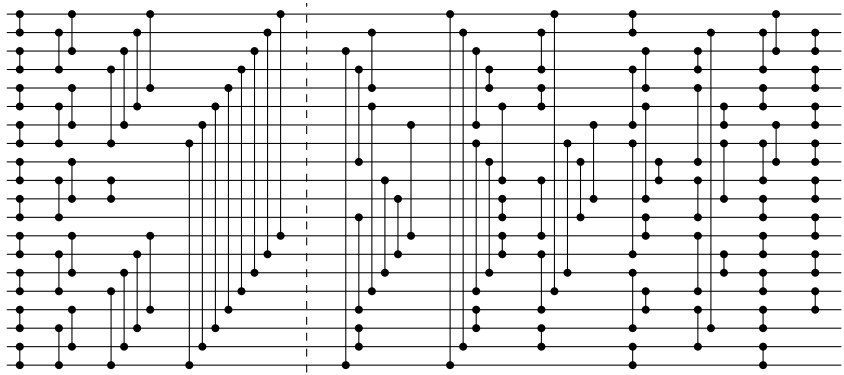

    \centering
    \begin{sortingnetwork}{20}{0.35}
        \nodeconnection{ {1,2}, {3,4}, {5,6}, {7,8}, {9,10}, {11,12}, {13,14}, {15,16}, {17,18}, {19,20}}
        \addtocounter{sncolumncounter}{2}
        \nodeconnection{ {1,3}, {5,7}, {9,11}, {13,15}, {17,19}}
        \nodeconnection{ {2,4}, {6,8}, {10,12}, {14,16}, {18,20}}
        \addtocounter{sncolumncounter}{2}
        \nodeconnection{ {1,5}, {10,11}, {13,17}}
        \nodeconnection{ {2,6}, {14,18}}
        \nodeconnection{ {3,7}, {15,19}}
        \nodeconnection{ {4,8}, {16,20}}
        \addtocounter{sncolumncounter}{2}
        \nodeconnection{ {1,13}}
        \nodeconnection{ {2,14}}
        \nodeconnection{ {3,15}}
        \nodeconnection{ {4,16}}
        \nodeconnection{ {5,17}}
        \nodeconnection{ {6,18}}
        \nodeconnection{ {7,19}}
        \nodeconnection{ {8,20}}  \nextlayer
        \addtocounter{sncolumncounter}{2}
        \nodeconnection{ {1,18}}
        \nodeconnection{ {2,3}, {4,9}, {12,17}}
        \nodeconnection{ {5,15}, {16,19}}
        \nodeconnection{ {6,11}}
        \nodeconnection{ {7,10}}
        \nodeconnection{ {8,14}}
        \addtocounter{sncolumncounter}{2}
        \nodeconnection{ {1,20}}
        \nodeconnection{ {2,19}}
        \nodeconnection{ {3,4}, {5,13}, {14,18}}
        \nodeconnection{ {6,12}, {16,17}}
        \nodeconnection{ {7,8}, {9,10}, {11,15}}
        \addtocounter{sncolumncounter}{2}
        \nodeconnection{ {2,3}, {4,7}, {8,11}, {15,16}, {17,19}}
        \nodeconnection{ {5,20}}
        \nodeconnection{ {6,13}}
        \nodeconnection{ {9,12}}
        \nodeconnection{ {10,14}}
        \addtocounter{sncolumncounter}{2}
        \nodeconnection{ {1,2}, {3,6}, {7,13}, {14,17}, {19,20}}
        \nodeconnection{ {4,5}, {8,9}, {10,15}, {16,18}}
        \nodeconnection{ {11,12}}
        \addtocounter{sncolumncounter}{2}
        \nodeconnection{ {2,4}, {5,8}, {9,11}, {12,16}, {17,18}}
        \nodeconnection{ {3,19}}
        \nodeconnection{ {6,7}, {10,13}, {14,15}}
        \addtocounter{sncolumncounter}{2}
        \nodeconnection{ {1,2}, {3,4}, {5,6}, {7,8}, {9,10}, {11,13}, {15,16}, {17,19}}
        \nodeconnection{ {12,14}, {18,20}}
        \addtocounter{sncolumncounter}{2}
        \nodeconnection{ {4,5}, {6,7}, {8,9}, {10,11}, {12,13}, {14,15}, {16,17}, {18,19}}
        
    \end{sortingnetwork}
    \caption{An $11$-layer sorting network on $20$ channels with the
      selected $4$-layer prefix.}
    \label{fig:20_11}
\end{figure}

The techniques presented in this paper massively improve the
computational time required to find the depth-$10$ and depth-$11$
sorting networks for $17$ and $20$ channels,
respectively. Table~\ref{tab:results_17} presents the impact of the
three main contributions: introducing the constraints on the last
layers (column one), applying the improved SAT encoding (column two),
and optimizing the selected prefix (column three).
The bottom row of the table specifies that, without applying the
techniques of this paper, the SAT solving times to find the two
networks are $9{,}995$ and $45{,}596$ seconds, respectively.
The first row of the table specifies that, when applying all three
techniques together, the SAT solving times to find the two networks are $17$ and
$46$ seconds, respectively. This constitutes speed-ups by a factor of $587$
in the case of $17$ channels, and $991$, for $20$ channels.

Closer examination of the table highlights the impact of each
combination of the three techniques. For example, the third row from
the bottom in comparison to the bottom row, indicates speed-ups in
order(s) of magnitude in the search for a depth-$10$ and depth-$11$
sorting networks on $17$ and $20$ channels when introducing only the improved
SAT encoding described in Section~\ref{sec:improved}.
The actual networks found with each configuration of the solver might
differ; for example, the networks in Figures~\ref{snw:17_10}
and~\ref{fig:20_11}, which were found using the improved encoding, do
not have an optimized prefix; that in Figure~\ref{fig:20_11} also does
not have a co-saturated last layer. If these optimizations are also used,
different sorting networks of the same depth will be found.

\begin{table}[htdp]
\begin{center}
\begin{tabular}{c | c | c | r | r}
    \toprule
 Symmetry Break & Improved Encoding & Opt. Prefix & \multicolumn{2}{c}{ Time/s} \\
 (last layers)	& 		    & 		  & $n=17$ & $n=20$ \\
                \midrule
 yes & yes & yes & $17$ & $46$ \\
 yes & yes & no & $78$ & $560$ \\
 yes & no & yes & $37$ & $13{,}199$ \\
 yes & no & no & $265$ & $1{,}255$\\
 no & yes & yes & $64$ & $97$\\
 no & yes & no & $838$ & $412$ \\
 no & no & yes & $3{,}723$ & $36{,}159$ \\
 no & no & no & $9{,}995$ & $45{,}596$ \\
 \bottomrule
\end{tabular}
\end{center}
\caption{Impact of the different optimizations in the time required to find the new sorting networks on $17$ and $20$ channels.}
\label{tab:results_17}
\end{table}%

\section{Depth $10$ is Optimal for Sorting $17$ Inputs}
\label{sec:lowbound}
After having found a sorting network for $17$ inputs using only $10$
layers, we seek to prove its optimality. This task can be decomposed
in showing that none of the $609$ prefixes in $R_{17}$ can be extended
to a $9$-layer sorting network~\cite{BZ14}.  

In a first attempt, we tried to prove this by using a straightforward
SAT encoding, basically identical to the one introduced
in~\cite{BZ14}, enriched by the symmetry-breaking constraints on the
last layers described in Section~\ref{sec:last}. Using $150$ CPU
cores, we started a SAT solver for each of the first $150$
cases. Whenever a filter was shown unfeasible, we started a solver for
the next case which had not yet been considered. Before we broke up
this experiment after $48$ days, we were able to prove that $381$ out
of $609$ prefixes cannot be extended to sorting networks of depth $9$.
Showing unsatisfiability of these formulae took a total of $353 \cdot
10^6$ seconds of CPU time, with a maximum of $3 \cdot 10^6$ seconds.

In a second attempt, we applied all three of the optimizations
described in this paper: introducing the constraints on the last
layers, applying the improved SAT encoding, and optimizing the
selected prefix. For the latter, for each of the $609$ distinct $2$-layer prefixes
in $R_{17}$ we chose a permutation minimizing the size of the
windows in the outputs of the prefix as explained in
Section~\ref{sec:permuting}.  This time, we were able to prove that
none of the $609$ prefixes can be extend to a $9$-layer sorting network.
The overall CPU time for all $609$ instances was $27.63 \cdot 10^6$
seconds, with a maximum running time of $97{,}112$ seconds. Optimizing
the prefixes took a total of 25 minutes.  This is a speed up of factor
at least $42.7$ concerning the maximum running time, and $20.4$ for
the average running time.  Since the result for all $2$-layer prefixes
was unsat, we conclude:
\begin{theorem}%
Every sorting network for $n \geq 17$ channels has at least $10$ layers.
\end{theorem}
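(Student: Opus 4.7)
The plan is to reduce the claim for general $n \geq 17$ to the case $n = 17$ and then to a large but finite case analysis driven by the SAT encoding developed in the paper. First, $t_n$ is monotone in $n$: given a $d$-layer sorting network on $n+1$ channels, fixing a $1$ on channel $n+1$ makes every comparator $(i, n+1)$ a no-op (its min output equals $x_i$ and its max output equals $1$), so one can delete all such comparators without affecting the behavior on channels $1,\dots,n$ and obtain a $d$-layer sorting network on $n$ channels. Hence it suffices to prove $t_{17} \geq 10$, i.e., that no 9-layer sorting network on 17 channels exists.

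For $n = 17$, I would exploit the fact that $R_{17}$ is a complete set of two-layer filters: any putative depth-9 sorting network has the form $F;C$ for some $F \in R_{17}$ and some 7-layer extension $C$. It therefore suffices to check, for each of the 609 prefixes $F \in R_{17}$, that the SAT formula $\varphi^{\mathit{last}}_F(17, 9)$ of Equation~(\ref{eq:satP:last}) is unsatisfiable. For each such $F$ I would: (i)~replace $F$ by an equivalent prefix $F'$ of smaller total output window-size via the evolutionary permute-and-untangle procedure of Section~\ref{sec:permuting}; (ii)~emit the encoding for $F'$ combining the leading-zero/trailing-one fixing and the $\OD/\OU/\ND/\NU$ propagation variables of Section~\ref{sec:improved} with the end-game constraint $\mathit{last}$ of Section~\ref{sec:cosat} (justified by Theorem~\ref{thm:cosat}); and (iii)~hand the formula to a SAT solver and record UNSAT. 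All 609 calls are independent and would be distributed over a cluster; collecting UNSAT from every instance yields $t_{17} \geq 10$, and monotonicity propagates the bound to all $n \geq 17$.

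The main obstacle is entirely computational, not conceptual: Section~\ref{sec:upbound} already documents that without the three optimizations running in concert, individual prefixes can require on the order of $3 \cdot 10^6$ seconds each, and under the naive encoding many cannot be closed at all in reasonable time. The hard part will therefore be the few pathological prefixes whose encodings remain large even after permutation optimization; for those I expect the end-game constraints to be essential, since they force the last two layers into a narrow co-saturated shape and cut off branches that would otherwise survive deep into the search. If even one such prefix resists all optimizations, the whole strategy stalls, so a prudent preliminary step is to calibrate the permutation search and the $\mathit{last}$ constraints on a handful of worst-case filters before launching the full sweep over $R_{17}$.
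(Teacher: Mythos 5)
Your proposal is correct and follows essentially the same route as the paper: decompose over the $609$ two-layer prefixes in $R_{17}$, apply the prefix-permutation, improved-encoding, and last-layer optimizations, and verify UNSAT for every instance by SAT solving. The only addition is that you spell out the monotonicity reduction from $n\geq 17$ to $n=17$ (deleting comparators touching a channel pinned to $1$), which the paper leaves implicit; that argument is standard and correct.
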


\section{Conclusions}
\label{sec:concl}

The contributions of this paper can be divided in theoretical insights into, improved methods for, and consequent new results on depth-optimal sorting networks.

Based on the first systematic exploration of what happens at the \emph{end} of a sorting network, we have discovered both necessary and sufficient properties of the last layers of depth-optimal sorting networks. To this end we have introduced the general concept of $k$-blocks, allowing to reason about the sorting behavior of suffixes of sorting networks. Specifically, we have shown that non-redundant comparators on the last layer must connect adjacent channels, and that there is always a depth-minimal sorting networks with $L^1_P$ as the last layer. We also discovered a sufficient condition for when adding a comparator to a sorting network results in another sorting network.

We have improved SAT-based methods for deciding whether a prefix can be extended to a sorting network of a given depth in three ways. First, based on the novel theoretical insights, we have added symmetry breaking constraints regarding the last layers of the network. Second, we have optimized the encoding for windowed inputs by trading off a slight increase in variables for vast reductions of the number of clauses and literals. Third, we have shown that not all prefixes are created equal and that, indeed, choosing a permutation of the prefix that maximizes leading zeros and trailing ones leads to a significant reduction of the overall size of the SAT problem. Computer experiments have shown that all three improvements separately translate to significant improvements in SAT solving time, yielding several orders of magnitude when used synergistically.

Exploiting the improved SAT-based methods, we have been able to find sorting networks of lesser depth than previously known for $17$, $20$, and (consequently) $19$ channels, lowering the upper bounds for optimal depth to $10$, $11$, and $11$, respectively. In addition, we have proved optimality of our sorting network on $17$ channels using extensive computer experiments, determining the optimal depth for this case to be $10$, and pushing the boundary of our knowledge one step further.
%

%
%
%
%

%



\begin{thebibliography}{10}

\bibitem{Batcher2011}
Sherenaz W. Al-Haj Baddar and Kenneth~E. Batcher.
\newblock {\em Designing Sorting Networks: A New Paradigm}.
\newblock Springer, 2011.

\bibitem{Batcher68}
Kenneth~E. Batcher.
\newblock Sorting networks and their applications.
\newblock In {\em AFIPS Spring Joint Computing Conference}, pages 307--314,
  1968.

\bibitem{BCCSZ14}
Daniel Bundala, Michael Codish, Lu{\'{\i}}s Cruz{-}Filipe, Peter
  Schneider{-}Kamp, and Jakub Z{\'{a}}vodn{\'{y}}.
\newblock Optimal-depth sorting networks.
\newblock {\em CoRR}, abs/1412.5302, 2014.

\bibitem{BZ14}
Daniel Bundala and Jakub Z{\'a}vodn{\'y}.
\newblock Optimal sorting networks.
\newblock In {\em LATA 2014}, volume 8370 of {\em LNCS}, pages 236--247.
  Springer, 2014.

\bibitem{ourICTAIpaper}
Michael Codish, Lu{\'\i}s Cruz-Filipe, Michael Frank, and Peter Schneider-Kamp.
\newblock Twenty-five comparators is optimal when sorting nine inputs (and
  twenty-nine for ten).
\newblock In {\em ICTAI 2014}, pages 186--193. IEEE, December 2014.

\bibitem{ourhopefulpaper}
Michael Codish, Lu{\'\i}s Cruz-Filipe, Markus Nebel, and Peter Schneider-Kamp.
\newblock Applying sorting networks to synthesize optimized sorting libraries.
\newblock In {\em Proceedings of LOPSTR 2015}, 2015.
\newblock Accepted for publication.

\bibitem{ourSYNASCpaper}
Michael Codish, Lu{\'\i}s Cruz-Filipe, and Peter Schneider-Kamp.
\newblock The quest for optimal sorting networks: Efficient generation of
  two-layer prefixes.
\newblock In F.~Winkler, V.~Negru, T.~Ida, T.~Jebelan, D.~Petcu, S.M. Watt, and
  D.~Zaharie, editors, {\em SYNASC 2014}, pages 359--366. IEEE, 2015.

\bibitem{ourLATApaper}
Michael Codish, Lu{\'\i}s Cruz-Filipe, and Peter Schneider-Kamp.
\newblock Sorting networks: the end game.
\newblock In A.-H. Dediu, E.~Formenti, C.~Mart{\'{\i}}n{-}Vide, and B.~Truthe,
  editors, {\em Language and Automata Theory and Applications -- 9th
  International Conference, {LATA} 2015, Nice, France, March 2-6, 2015,
  Proceedings}, volume 8977 of {\em LNCS}, pages 664--675. Springer, 2015.
\newblock DOI 10.1007/978-3-319-15579-1\_52.

\bibitem{CodishZ10}
Michael Codish and Moshe Zazon{-}Ivry.
\newblock Pairwise cardinality networks.
\newblock In Edmund~M. Clarke and Andrei Voronkov, editors, {\em Logic for
  Programming, Artificial Intelligence, and Reasoning - 16th International
  Conference, LPAR-16, Dakar, Senegal, April 25-May 1, 2010, Revised Selected
  Papers}, volume 6355 of {\em Lecture Notes in Computer Science}, pages
  154--172. Springer, 2010.

\bibitem{Coles12}
Drue Coles.
\newblock Efficient filters for the simulated evolution of small sorting
  networks.
\newblock In {\em Proceedings of {GECCO}'12}, pages 593--600. {ACM}, 2012.

\bibitem{deBruijn74}
N.G. de~Bruijn.
\newblock Sorting by means of swappings.
\newblock {\em Discrete Mathematics}, 9(4):333--339, October 1974.

\bibitem{DBLP:conf/cie/EhlersM15}
Thorsten Ehlers and Mike M{\"{u}}ller.
\newblock New bounds on optimal sorting networks.
\newblock In Arnold Beckmann, Victor Mitrana, and Mariya~Ivanova Soskova,
  editors, {\em Evolving Computability - 11th Conference on Computability in
  Europe, CiE 2015, Bucharest, Romania, June 29 - July 3, 2015. Proceedings},
  volume 9136 of {\em Lecture Notes in Computer Science}, pages 167--176.
  Springer, 2015.

\bibitem{Furtak2007}
Timothy Furtak, Jos{\'e}~Nelson Amaral, and Robert Niewiadomski.
\newblock Using {SIMD} registers and instructions to enable instruction-level
  parallelism in sorting algorithms.
\newblock In {\em SPAA '07}, pages 348--357. ACM, 2007.

\bibitem{Knuth73}
Donald~E. Knuth.
\newblock {\em The Art of Computer Programming, Volume III: Sorting and
  Searching}.
\newblock Addison-Wesley, 1973.

\bibitem{MorgensternS11}
Andreas Morgenstern and Klaus Schneider.
\newblock Synthesis of parallel sorting networks using {SAT} solvers.
\newblock In Frank Oppenheimer, editor, {\em Methoden und Beschreibungssprachen
  zur Modellierung und Verifikation von Schaltungen und Systemen (MBMV),
  Oldenburg, Germany, February 21-23, 2011}, pages 71--80. OFFIS-Institut
  f{\"{u}}r Informatik, 2011.

\bibitem{Parberry87}
Ian Parberry.
\newblock {\em Parallel complexity theory}.
\newblock Research notes in theoretical computer science. Pitman, 1987.

\bibitem{Parberry89}
Ian Parberry.
\newblock A computer assisted optimal depth lower bound for sorting networks
  with nine inputs.
\newblock In {\em Proceedings of the 1989 ACM/IEEE Conference on
  Supercomputing}, Supercomputing '89, pages 152--161, New York, NY, USA, 1989.
  ACM.

\bibitem{Parberry91}
Ian Parberry.
\newblock A computer-assisted optimal depth lower bound for nine-input sorting
  networks.
\newblock {\em Mathematical Systems Theory}, 24(2):101--116, 1991.

\bibitem{Parberry92}
Ian Parberry.
\newblock The pairwise sorting network.
\newblock {\em Parallel Processing Letters}, 2:205--211, 1992.

\bibitem{Parberry15}
Ian Parberry.
\newblock Personal communication, 2015.

\bibitem{ValsalamM13}
Vinod~K. Valsalam and Risto Miikkulainen.
\newblock Using symmetry and evolutionary search to minimize sorting networks.
\newblock {\em Journal of Machine Learning Research}, 14:303--331, 2013.

\bibitem{Voorhis72}
D.C. van Voorhis.
\newblock Toward a lower bound for sorting networks.
\newblock In R.E. Miller and J.W. Thatcher, editors, {\em Complexity of
  Computer Computations}, The {IBM} Research Symposia Series, pages 119--129.
  Plenum Press, New York, 1972.

\end{thebibliography}

\end{document}